\newtheorem{theorem}{Theorem}[section]
\newtheorem{lemma}[theorem]{Lemma}
\newtheorem{proposition}[theorem]{Proposition}
\newtheorem{remark}[theorem]{Remark}
\newtheorem{assumption}[theorem]{Assumption}
\crefname{assumption}{assumption}{assumptions}
\newcommand\antdd{\dl{^{\textnormal{ant}}}\dl[k]{X}\p{\Yantf{\ell}{0}, \Yantf{\ell}{1}; \Yantc{\ell-1}}}
\newcommand\Yt[1][1]{Y_{#1}}  \newcommandx\Ytell[2][1=1,2=\ell]{\Yt[#1,#2]}  
\DeclareMathOperator{\Heavsymb}{\mathnormal{H}}
\xParseDeclareExpectation{\Ern}{\mathsf{E}^{\mathsf{Q}}}[]
\xParseDeclareExpectation{\Ephys}{\mathsf{E}^{\mathsf{P}}}[]
\xParseDeclareExpectation{\Heav}{\Heavsymb}()
\newcommand{\dl}[1][\ell]{{\Delta_{#1}}} \newcommandx{\dldl}[3][1=\ell,2=k,3=g]{\dl[#1]{\dl[#2]{#3}} }
\newcommand\tol{\varepsilon}  
\newcommand\costmlmc[1][\tol]{\textnormal{Cost}_{\textnormal{MLMC}}\p*{#1}}
\newcommand\nestE[1][0]{U_{#1}}  \newcommandx\appE[2][1=\ell, 2=1]{\widehat{U}_{#2, #1}}  \newcommandx\appEml[2][1=\ell, 2=1]{\widehat{U}_{#2, #1}^{\p{\textnormal{ML}}}}  \newcommandx\antEc[3][1=0,2=\ell-1,3=1]{\widehat{U}_{{#3}, {#2}}^{\p{\textnormal{c}, #1}}}  \newcommandx\antEf[2][1=\ell,2=1]{\widehat{U}_{{#2}, {#1}}^{\p{\textnormal{f}}}}  \newcommandx\antEcml[3][1=0,2=\ell-1,3=1]{\widehat{U}_{{#3}, {#2}}^{\p{\textnormal{ML}, \textnormal{c}, #1}}}  \newcommandx\antEfml[2][1=\ell,2=1]{\widehat{U}_{{#2}, {#1}}^{\p{\textnormal{ML}, \textnormal{f}}}}  \newcommandx\antEcmly[3][1=0,2=\ell-1,3=1]{\widehat{U}_{{#3}, {#2}}^{\p{\textnormal{ML}, \textnormal{c}, #1}}}  \newcommandx\antEfmly[2][1=\ell,2=1]{\widehat{U}_{{#2}, {#1}}^{\p{\textnormal{ML}, \textnormal{f}}}}  \newcommandx\antEfmlun[2][1=\ell,2=1]{\overline{U}_{{#2}, {#1}}^{\p{\textnormal{ML}, f}}}  \newcommandx\Ml[2][1=0,2=1]{N_{#1}}  \newcommandx\Mlk[3][1=\ell,2=k,3=1]{\Ml[#1,#2][#3]}  
\newcommand\bq[1][q]{\beta_{#1}}  \newcommand\bqy[1][q]{\theta_{#1}}  \newcommand\Xl[1][k]{X_{#1}}
\newcommand\tpayoff[1][Y]{\pi\p{#1}}
\newcommand\Yantf[2]{Y_{#1}^{\textnormal{f}, #2}}
\newcommand\Yantc[1]{Y_{#1}^{\textnormal{c}}}
\newcommand\dlant[1][\ell]{\Delta_{#1}^\textnormal{(ant)}}
\newcommand\dlantml[1][\ell]{\Delta_{#1}^{\p{\textnormal{ML}}}}
\newcommand\dlantmly[1][\ell]{\Delta_{#1}^{\p{\textnormal{ML}, Y}}}
\newcommand\dlantpath[1][\ell]{\Delta_{#1}^{\p{\textnormal{ant}, Y}}}
\newcommandx\deltemp[2][1=\ell,2=t_1]{\Delta_{#1, #2}}
\newcommandx\dlj[2][1 = \ell, 2 = j]{\Delta_{#2, #1}^{\p{\textnormal{rec}, T}}}  \newcommandx\dljpath[2][1=j,2=\ell]{\Delta_{#1, #2}^{\p{\textnormal{path}, T}}}
\newcommandx\sgma[2][2=\ell]{\sigma\p{#1}}
\newcommandx\sgmaq[2][2=\ell]{\sigma^q\p{#1}}
\newcommandx\sgmaj[4][2=\ell,3=j,4=q]{\sigma_{\p{#3}}^{#4}\p{#1}}
\newcommandx\sgmahat[2][2=\ell]{\hat{\sigma}^q\p{#1}} \RequirePackage{pgfplots, tikz, pgfplotstable}
\pgfplotsset{compat=newest}
\pgfplotsset{width = 5.5cm,
			 every axis plot post/.append style = {line width = 1pt,
		 		every mark/.append style = {solid}}}
\newcommand\opac{0.6}  %
\begin{document}
	\title{\vspace{-2cm}Nested Multilevel Monte Carlo with Biased and Antithetic Sampling}
\author{Abdul-Lateef Haji-Ali\thanks{Maxwell Institute for Mathematical Sciences, Department of Actuarial Mathematics and Statistics, Heriot-Watt University, Edinburgh, UK, EH14 4AS. (\href{mailto::a.hajiali@hw.ac.uk}{a.hajiali@hw.ac.uk}).}
	\and Jonathan Spence\thanks{Corresponding author. Maxwell Institute for Mathematical Sciences, Department of Actuarial Mathematics and Statistics, Heriot-Watt University, Edinburgh, UK, EH14 4AS. (\href{mailto::jws5@hw.ac.uk}{jws5@hw.ac.uk}).}}
	
\date{}
\maketitle
\begin{abstract}
\textbf{AMS Subject Classication: } 65C05, 62P05

\textbf{Keywords: } Multilevel Monte Carlo, Nested simulation, Antithetic Sampling, Risk Estimation 

We consider the problem of estimating a nested structure of two expectations taking the form \(U_0 = \mathsf{E}[\max\{U_1(Y), \pi(Y)\}]\), where \(U_1(Y) = \mathsf{E}[X\ |\  Y]\). Terms of this form arise in financial risk estimation and option pricing. When \(U_1(Y)\) requires approximation, but exact samples of  \(X\) and \(Y\) are available, an antithetic multilevel Monte Carlo (MLMC) approach has been well-studied in the literature. Under general conditions, the antithetic MLMC estimator obtains a root mean squared error \(\varepsilon\) with order \(\varepsilon^{-2}\) cost. If, additionally, \(X\) and \(Y\) require approximate sampling, careful balancing of the various aspects of approximation is required to avoid a significant computational burden. Under strong convergence criteria on approximations to  \(X\) and \(Y\), randomised multilevel Monte Carlo techniques can be used to construct unbiased Monte Carlo estimates of \(U_1\), which can be paired with an antithetic MLMC estimate of \(U_0\) to recover order \(\varepsilon^{-2}\) computational cost.  In this work, we instead consider biased multilevel approximations of \(U_1(Y)\), which require less strict assumptions on the approximate samples of \(X\). Extensions to the method consider an approximate and antithetic sampling of \(Y\). Analysis shows the resulting estimator has order \(\varepsilon^{-2}\) asymptotic cost under the conditions required by randomised MLMC and order \(\varepsilon^{-2}|\log\varepsilon|^3\) cost under more general assumptions.
\end{abstract} 	\section{Introduction}
	This paper considers the nested use of multilevel Monte Carlo (MLMC) techniques to approximate {two nested} expectations requiring recursive numerical approximation. We focus on {the problem}
\begin{equation}
\label{eqn:nestexp2}
\begin{aligned}
	\nestE[0] &= \E*{\max\br*{\nestE[1]\p{Y}, \tpayoff}}\\
	\nestE[1]\p{Y} &= \E*{X\given Y},
\end{aligned}
\end{equation}
{defined for some \(\rset\)-valued random variable \(X\) and \(\rset^d\)-valued random variable \(Y\), and with payoff function \(\pi:\rset^d{\to} \rset\). Of particular interest in this work is the case where \(\nestE[1]\p{Y}\) requires approximation, and neither \(X\) nor \(Y\) can be sampled directly.} In {financial applications}, \(Y\) can represent a random intermediate state of an underlying {asset} {whereas \(X\) can represent future profits or losses after the intermediate state \(Y\).} {Nested expectations of this form are of interest in financial risk estimation \cite{GilesHajiAli:2018,hajiali2021adaptive,GilesHajiAliSpence:CVA,Gordy:2010}. For example, the expected shortfall can be expressed in the form \eqref{eqn:nestexp2} with no intermediate payoff (\(\tpayoff = 0\)) and where \(X\) represents losses above the value-at-risk, conditioned on a risk scenario \(Y\) \cite{GilesHajiAli:2018,rtu:cvar,cfl:2023}. An additional application is in non-European option pricing \cite{zhou2022}, where \(Y\) again represents asset values at some intermediate time. {Given \(Y\), it is possible to either receive some immediate payoff \(\tpayoff\) or to wait until a future time and receive a random payoff \(X\), depending on the intermediate state \(Y\).}   Additional applications include credit risk estimation \cite{GilesHajiAliSpence:CVA} and optimal control \cite{zhou2022,gg19}.}   \\

When \(X\) and \(Y\) can be sampled directly, an MLMC estimate using antithetic Monte Carlo {estimates} of \(\nestE[1]\p{Y}\) is analysed in \cite{GilesHajiAli:2018,Bourgey20,BujokK2015}. Under general conditions, this approach approximates \(\nestE[0]\) with root mean square error \(\tol\) at cost of order \(\tol^{-2}\). The {authors in} \cite{GilesHajiAliSpence:CVA} use{d} a similar approach to consider the case where \(X\) and \(Y\) are given by a Brownian SDE to be approximated using Milstein discretisation. There, unbiased estimates of \(X\) are sampled using randomised MLMC techniques \cite{Rhee2015_Unbiased_estimation}. The unbiased estimates are then used within an antithetic Monte Carlo difference coupled with a multilevel approximation to \(Y\). Using randomised MLMC to sample unbiased estimates of \(X\) simplifies the resulting estimator by removing an extra layer of bias but requires strict convergence assumptions on the approximation to \(X\). In \cite{zhou2022,syed2023optimal}, the authors consider a recursion of the problem \eqref{eqn:nestexp2}, taking \(X = \max\br{\nestE[2]\p{Y^\prime}, \pi\p{Y^\prime}}\), with \(\nestE[2]\p{Y^\prime} = \E{X^\prime\given Y^\prime}\) for {an \(\rset\)-valued random variable \(X^\prime\) and \(\rset^d\)-valued random variable \(Y^\prime\), depending on \(Y\)}. Assuming exact samples of \(X^\prime, Y, Y^\prime\) are available, the authors randomise the antithetic MLMC approach in \cite{Bourgey20,GilesHajiAli:2018,BujokK2015}  to construct unbiased estimates of \(X\). The method is then applied iteratively to construct an unbiased estimate for an arbitrary number of nested expectations of this form.\\

The key contributions in {the current work} are as follows:
\begin{enumerate}
	\item {\Cref{sec:twostageexactmc} provides an overview of Monte Carlo and (antithetic) multilevel Monte Carlo estimation of \eqref{eqn:nestexp2} when \(X\) and \(Y\) can be sampled directly. \Cref{lem:ant_general,lem:ant_general_smooth} contain a general convergence result for antithetic multilevel Monte Carlo differences.}
	\item Analysis of a nested (biased) MLMC approximation for the two-stage problem \eqref{eqn:nestexp2} using approximate samples of \(X\) and \(Y\)  is presented in \Cref{sec:twostageexact}. The method permits biased approximations to \(\nestE[1]\p{Y}\), which permit biased approximations of \(X\).  {The analysis includes convergence of arbitrary moments of the multilevel estimates. Convergence of higher moments is useful, for example, if applying the approach recursively to deal with more layers of nested expectations as in \cite{zhou2022,syed2023optimal}, see \Cref{rem:two_stage_prop_recursive}; or when paired with more sophisticated estimators such as adaptive MLMC \cite{GilesHajiAli:2018,hajiali2021adaptive,Elfverson:2016selectiverefinement} or multilevel path branching \cite{gh22b}.}
	\item \Cref{sec:ant_path_2stage} extends the above approach to include antithetic simulation of \(Y\), resulting in a doubly antithetic estimator for \(\nestE[0]\). This approach helps improve convergence rates when \(Y\) is the solution of an SDE for which Milstein simulation is infeasible due to the computation of L\'evy areas \cite{giles14antmilstein,cc80}.
	\item Numerical experiments supporting the theoretical results are presented in \Cref{sec:numerics}.
\end{enumerate} 	
	\section{Exact Nested Sampling}
\label{sec:twostageexactmc}

			We begin under the assumption that exact samples of \(X\) and \(Y\) are available, but the inner expectation \(\nestE[1]\p{Y}\) requires approximation. One approach to approximate \(\nestE[0]\) is using a Monte Carlo average over \(Y\), with an inner nested Monte Carlo approximation of \(\nestE[1]\p{Y}\) given \(Y\). That is, for \(\Ml[L] = \Ml[0]2^L\), let
\[
	\begin{aligned}
	\nestE[1]\p{Y} \approx \appE[L]\p{Y} \defeq \frac{1}{\Ml[L]}\sum_{n=1}^{\Ml[L]} X^{(n)}\p{Y},
	\end{aligned}
\] 
{where \(\br{X^{(n)}\p{y}}_{n=1}^{\Ml[L]}\) are conditionally independent samples of \(X\) given \(Y = y\)}. For \(M\) independent samples of \(Y\) denoted by \(\br{Y^{\p{m}}}_{m=1}^M\), a (biased) nested Monte Carlo estimate of \(\nestE[0]\) is then given by
\[
\begin{aligned}
\nestE[0] &\approx \E*{\max\br*{\appE[L]\p{Y}, \tpayoff}}\\
&\approx \frac{1}{M}\sum_{m=1}^{M} \max\br*{\appE[L]\p{Y^{(m)}},\ \tpayoff[Y^{\p{m}}]}\\
&= \frac{1}{M}\sum_{m=1}^{M} \max\br*{\frac{1}{\Ml[L]}\sum_{n=1}^{\Ml[L]} X^{(m,n)}\p{Y^{(m)}} ,\ \tpayoff[Y^{\p{m}}]}.
\end{aligned}
\]
{
Under a Lipschitz condition on the distribution of \(\nestE[1]\p{Y}\) and assuming \(\E{\abs{X}^q}<\infty\) for some \(q>2\), the bias \(\abs{\E{\max\br{\nestE[1]\p{Y}, \tpayoff}} - \E{\max\br{\appE[L]\p{Y}, \tpayoff}}}\) induced by approximating \(\nestE[1]\) by \(\appE[L]\) is of order \(\Ml[L]^{-q/\p{q+1}}\) \cite[Proposition 4]{Bourgey20},\cite[Proposition 4]{BujokK2015}. To achieve a root mean square error \(\tol\), one should take \(M = \Order{\tol^{-2}}\) and \(L = \Order{\log\p{\tol^{-\p{q+1}/q}}}\) giving a total cost of \(M\Ml[L] = \Order{\tol^{-3 - 1/q}}\).}\\

To improve this cost, multilevel Monte Carlo (MLMC) techniques \cite{Giles2008MLMC} {approximate} \(\nestE[1]\p{Y}\) {at} all levels of accuracy \(0\le \ell\le L\). Defining the multilevel correction terms 
\begin{equation}
\label{eqn:dl_mcdiff}
\dl\p{Y} = 
\begin{cases}
\max\br*{\appE\p{Y}, \tpayoff} - \max\br*{\appE[\ell-1]\p{Y}, \tpayoff} & \text{if } \ell > 0\\
\max\br*{\appE[0]\p{Y}, \tpayoff}	&	\text{otherwise},
\end{cases}
\end{equation}
{where \(\appE\p{Y}\) and \(\appE[\ell-1]\p{Y}\) are conditionally independent given \(Y\)}. The MLMC estimator is given by
\begin{equation}
\label{eqn:mlmc}
\begin{aligned}
\nestE[0] \approx \E*{\max\br*{\appE[L]\p{Y}, \tpayoff}} &= \sum_{\ell=0}^L \E{\dl\p{Y}}\\
&\approx \sum_{\ell=0}^L \frac{1}{{M_\ell}}\sum_{m=1}^{{M_\ell}} \dl\p{Y^{(m)}},
\end{aligned}
\end{equation}
{where \(\br{\dl \p{Y^{(m)}}}_{m=1}^{{M_\ell}}\) are independent samples of \(\dl\p{Y}\).}
Since the map \(\p{u, y}\mapsto \max\br{u, y}\) is Lipschitz, the variance of the terms \(\dl\p{Y}\) is of order \(\E{\abs{\appE\p{Y} - \appE[\ell-1]\p{Y}}^2} = \Order{2^{-\ell}}\) assuming \(X\) has finite variance. {This approach requires fewer samples at {deep levels} to control the statistical error. As a result, MLMC complexity theory \cite{Cliffe:2011, Giles2008MLMC} provides values of \(L\) and \( \{M_\ell\}_{\ell=0}^L\) such that the cost of achieving root mean square error \(\tol\) is \(\Order{\tol^{-2}\p{\log\tol}^2}\).}\\

\subsection{Antithetic Monte Carlo Difference}
To remove the logarithmic term from the MLMC cost, we require \(\E{\abs{\appE\p{Y} - \appE[\ell-1]\p{Y}}^2}\) to decrease at a faster rate than order \(2^{-\ell}\). The related works \cite{GilesHajiAli:2018,Bourgey20,BujokK2015,gg19} achieve this result  by constructing an antithetic difference that better correlates the coarse and fine approximations in \(\dl\p{Y}\). The approach is based on the simple observation that whenever \(\p{x_0 - y}\p{x_1 - y} > 0\) we have 
\[
\max\br*{\frac{x_0+x_1}{2}, y} - \frac{1}{2}\p*{\max\br*{x_0, y} + \max\br*{x_1, y}} = 0.
\]
This fact leads to the following result, which we utilise throughout {the current work}:
\begin{lemma}
	\label{lem:ant_general}
	Let \(Z\) represent {an \(\rset\)-valued} random variable such that for some \(\bar \delta, \bar \rho > 0\),  we have
	\begin{equation}
		\label{eqn:cdf_bound_general}
		\prob{\abs{Z} < x} \le \bar\rho x,	
	\end{equation}
	for all \(x \le \bar\delta\). Consider  {an approximation \(Z_\ell\) of} \(Z\), satisfying the condition that for some \(q > 2\) we have
	\[
		\E*{\abs{Z_\ell - Z}^q} < \infty.
	\]
	Let	\(Z_\ell^{\p{0}}, Z_\ell^{\p{1}}\) be two samples of \(Z_\ell\).
	{For all \(2 \le p\le q\),  {there is \(b_0 > 0\) independent of \(\ell\) such that}
	\[
		\E*{\abs*{\max\br*{\frac{Z_\ell^{\p{0}} + Z_\ell^{\p{1}}}{2}, 0} - \frac{1}{2}\sum_{i=0}^1 \max\br*{Z_\ell^{\p{i}}, 0}}^p} \le b_0\E*{\abs*{Z_\ell - Z}^q}^{\p{p+1}/\p{q+1}}.
	\]}
\end{lemma}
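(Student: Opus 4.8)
The plan is to reduce the antithetic difference to a quantity supported on the event where the two fine samples straddle the origin, bound its magnitude by the approximation error, and then trade the moment hypothesis against the anti-concentration estimate \eqref{eqn:cdf_bound_general} by optimising a threshold.

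Write $D_\ell \defeq \max\br*{\tfrac{1}{2}\p*{Z_\ell^{\p{0}} + Z_\ell^{\p{1}}}, 0} - \tfrac12\sum_{i=0}^1 \max\br*{Z_\ell^{\p{i}}, 0}$ and $A_\ell \defeq \max_{i \in \br{0,1}} \abs{Z_\ell^{\p{i}} - Z}$. The first step is to record, from the observation preceding the lemma (with threshold $0$), that $D_\ell = 0$ whenever $Z_\ell^{\p{0}}$ and $Z_\ell^{\p{1}}$ have the same sign, while if their signs disagree a direct case check gives $\abs{D_\ell} = \tfrac12\min\br{\abs{Z_\ell^{\p{0}}}, \abs{Z_\ell^{\p{1}}}}$. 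Moreover, disagreement of signs forces at least one sample, say $Z_\ell^{\p{i^*}}$, onto the opposite side of $0$ from $Z$, so that $\abs{Z_\ell^{\p{i^*}} - Z} = \abs{Z_\ell^{\p{i^*}}} + \abs{Z} \ge \max\br{\abs{Z_\ell^{\p{i^*}}}, \abs{Z}}$. Combining these observations yields the two inequalities I actually use, both valid almost surely: $\abs{D_\ell} \le \tfrac12 A_\ell$ and $\br{D_\ell \ne 0} \subseteq \br{\abs{Z} \le A_\ell}$.

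Consequently $\E{\abs{D_\ell}^p} \le 2^{-p}\E{A_\ell^p \mathbf 1[\abs{Z} \le A_\ell]}$. Because the threshold $A_\ell$ is correlated with $Z$, rather than conditioning on $Z$ I would use the deterministic inclusion $\br{\abs{Z} \le A_\ell} \subseteq \br{\abs{Z} \le \delta} \cup \br{A_\ell > \delta}$, valid for every $\delta > 0$. The tail contribution is handled by the moment hypothesis, $\E{A_\ell^p \mathbf 1[A_\ell > \delta]} \le \delta^{p-q}\E{A_\ell^q}$ since $p \le q$; the bulk contribution factorises, since $\mathbf 1[\abs{Z} \le \delta]$ depends on $Z$ alone, and Hölder's inequality with exponents $q/p$ and $q/(q-p)$ followed by \eqref{eqn:cdf_bound_general} (choosing $\delta \le \bar\delta$) gives $\E{A_\ell^p \mathbf 1[\abs{Z} \le \delta]} \le \E{A_\ell^q}^{p/q}\prob{\abs{Z} \le \delta}^{(q-p)/q} \le \p*{\bar\rho\delta}^{(q-p)/q}\E{A_\ell^q}^{p/q}$. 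Writing $m_q \defeq \E{A_\ell^q}$, which satisfies $m_q \le 2\E{\abs{Z_\ell - Z}^q}$ because each $Z_\ell^{\p{i}}$ carries the law of $Z_\ell$, I am left with $\E{\abs{D_\ell}^p} \le 2^{-p}\p*{\p*{\bar\rho\delta}^{(q-p)/q} m_q^{p/q} + \delta^{p-q} m_q}$.

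The final step balances the two terms in $\delta$: the increasing term $\delta^{(q-p)/q}$ and the decreasing term $\delta^{-(q-p)}$ are equalised at $\delta_* \propto \p{m_q/\bar\rho}^{1/(q+1)}$, and substituting $\delta_*$ collapses the bound to a constant multiple of $m_q^{(p+1)/(q+1)}$, which is exactly the claimed exponent once $m_q$ is replaced by $\E{\abs{Z_\ell - Z}^q}$. The sole point needing attention — and where I expect the only real friction — is uniformity of the constant $b_0$ in $\ell$ under the constraint $\delta \le \bar\delta$: the optimiser $\delta_*$ respects it precisely when $m_q \le \bar\rho\bar\delta^{q+1}$, which holds for all large $\ell$. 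For the remaining levels, where $m_q$ is bounded below by some constant $c > 0$, I would instead invoke Jensen's inequality $\E{\abs{D_\ell}^p} \le 2^{-p}m_q^{p/q}$ and use $p/q \le (p+1)/(q+1)$ to write $m_q^{p/q} = m_q^{p/q - (p+1)/(q+1)} m_q^{(p+1)/(q+1)} \le c^{\,p/q - (p+1)/(q+1)} m_q^{(p+1)/(q+1)}$, the first factor being controlled since its exponent is nonpositive and $m_q \ge c$. Taking $b_0$ to be the larger of the two resulting constants completes the argument.
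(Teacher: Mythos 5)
Your proof is correct and is essentially the paper's own argument (which the paper routes through the more general \Cref{lem:ant_general_smooth}): both proofs restrict to the sign-disagreement event, bound the antithetic difference there by the sampling error while noting that disagreement forces $\abs{Z}$ below that error, and then trade the anti-concentration bound \eqref{eqn:cdf_bound_general} against the $q$-th moment by splitting at a threshold optimised at the $1/(q+1)$ power of $\E*{\abs{Z_\ell - Z}^q}$, clamped so as not to exceed $\bar\delta$. Your use of H\"older's inequality for the bulk term and an explicit two-case split to keep $b_0$ uniform in $\ell$ are minor variations on the paper's direct magnitude cap and $\min$-clamped choice of threshold.
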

{The proof of \Cref{lem:ant_general} uses the linear growth condition on the distribution of \(Z\) around 0 and the bound on \(\E{\abs{Z_\ell - Z}^q}\) to control the probability that 
\[
\max\br*{\frac{Z_\ell^{\p{0}} + Z_\ell^{\p{1}}}{2}, 0} - \frac{1}{2}\sum_{i=0}^1 \max\br*{Z_\ell^{\p{i}}, 0}\neq 0.
\]
}
The result follows immediately from a more general statement proven in \Cref{lem:ant_general_smooth}, which may be of interest by itself.\\

Motivated by \Cref{lem:ant_general}, consider the antithetic correction terms \(\dlant\) defined for \(\ell>0\) by
\begin{equation}
\label{eqn:antf}
\dlant \p{Y} \defeq \max\br*{\antEf\p{Y}, \tpayoff} - \frac{1}{2}\sum_{i=0}^1{\max\br*{\antEc[i]\p{Y}, \tpayoff}},
\end{equation}
where \(\antEc[0]\p{Y}\) and \(\antEc[1]\p{Y}\) are conditionally independent samples of  \(\appE[\ell - 1]\p{Y}\) given \(Y\), and
\[
\begin{aligned}
\antEf\p{Y} &= \frac{1}{2}\p*{\antEc[0]\p{Y} + \antEc[1]\p{Y}}.
\end{aligned}
\]
Importantly, by the linearity of expectation, it follows that \(\E{\dlant\p{Y}} = \E{\dl \p{Y}}\). To use \Cref{lem:ant_general} we require the following condition:
\begin{assumption}
	\label{assumpt:xmoms_exact}
	For some \(q > 2\),
	\[
		\E*{\abs*{X}^q} + \E*{\abs*{\pi\p*{Y}}^q} < \infty.
	\]
	Moreover, there exists constants \(\bar\delta,\, \bar\rho > 0\) such that \(x < \bar\delta\) implies
	\[
		\prob*{\abs*{\nestE[1]\p{Y} - \tpayoff} < x} \le \bar\rho x.
	\]
\end{assumption}
This condition implies the following result, which generalises \cite[Proposition 2.5]{Bourgey20} to higher moments beyond the variance. 
\begin{proposition}
	\label{lem:antfexact}
	Consider the antithetic estimator \(\dlant\p{Y}\) defined by \eqref{eqn:antf} and let \Cref{assumpt:xmoms_exact} hold for some \(q > 2\). There is \(b_1 > 0\), independent of \(\ell\), such that for \(1\le p\le q\) we have
	\[
	\E*{\abs*{\dlant\p{Y}}^p} \le b_1 2^{-q\p{1+p^{-1}}p\ell/2\p{q+1}}. 
	\]
\end{proposition}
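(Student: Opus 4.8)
The plan is to recast \(\dlant\p{Y}\) into the form treated by \Cref{lem:ant_general} and then combine that lemma with a standard moment bound on the inner Monte Carlo error. Using the identity \(\max\br{a, \tpayoff} = \max\br{a - \tpayoff, 0} + \tpayoff\) in \eqref{eqn:antf}, the additive payoff terms cancel and
\[
\dlant\p{Y} = \max\br*{\frac{Z_\ell^{\p0} + Z_\ell^{\p1}}{2}, 0} - \frac12\sum_{i=0}^1 \max\br*{Z_\ell^{\p i}, 0}, \qquad Z_\ell^{\p i} \defeq \antEc[i]\p{Y} - \tpayoff.
\]
I then set \(Z \defeq \nestE[1]\p{Y} - \tpayoff\) and \(Z_\ell \defeq \appE[\ell-1]\p{Y} - \tpayoff\). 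Conditional on \(Y\), the two copies \(Z_\ell^{\p0}, Z_\ell^{\p1}\) are independent with the conditional law of \(Z_\ell\), and \(Z = \E{Z_\ell \given Y}\); this is exactly the coupling underlying \Cref{lem:ant_general}.

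First I would verify the two hypotheses of \Cref{lem:ant_general}. The small-ball bound \(\prob{\abs{Z} < x} \le \bar\rho x\) for \(x < \bar\delta\) is precisely the second part of \Cref{assumpt:xmoms_exact}. For the moment condition, note that \(Z_\ell - Z = \appE[\ell-1]\p{Y} - \nestE[1]\p{Y}\) is, conditional on \(Y\), a normalised sum of \(\Ml[\ell-1] = \Ml[0] 2^{\ell-1}\) independent centred copies of \(X - \nestE[1]\p{Y}\). The Marcinkiewicz--Zygmund inequality (applicable since \(q > 2\)) gives \(\E{\abs{Z_\ell - Z}^q \given Y} \le C_q \Ml[\ell-1]^{-q/2}\, \E{\abs{X - \nestE[1]\p{Y}}^q \given Y}\); taking expectations and using \(\E{\abs{X}^q} < \infty\) together with conditional Jensen for the centring term yields
\[
\E*{\abs*{Z_\ell - Z}^q} \le C\, 2^{-q\ell/2}.
\]

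For \(2 \le p \le q\) the statement then follows at once: \Cref{lem:ant_general} gives \(\E{\abs{\dlant\p{Y}}^p} \le b_0\, \E{\abs{Z_\ell - Z}^q}^{\p{p+1}/\p{q+1}}\), and inserting the moment bound produces the rate \(2^{-q\p{p+1}\ell/2\p{q+1}}\), which matches the claim after simplifying \(\p{1 + p^{-1}}p = p + 1\).

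The range \(1 \le p < 2\) falls outside the hypotheses of \Cref{lem:ant_general}, and since the target exponent is affine in \(p\) while Jensen's inequality applied to the \(p = 2\) bound yields a strictly slower rate for \(p < 2\), a separate argument is required; this is the main obstacle. I would close the gap by interpolating against the probability that the antithetic difference fails to vanish. Writing \(\dlant\p{Y} = \dlant\p{Y}\,\Ind{\dlant\p{Y} \neq 0}\) and applying Hölder,
\[
\E*{\abs*{\dlant\p{Y}}^p} \le \E*{\abs*{\dlant\p{Y}}^2}^{p/2}\, \prob{\dlant\p{Y} \neq 0}^{\p{2-p}/2},
\]
the first factor being controlled by the \(p = 2\) case above. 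For the second, \(\dlant\p{Y} \neq 0\) forces \(Z_\ell^{\p0}, Z_\ell^{\p1}\) onto opposite sides of \(0\), which for \(Z \neq 0\) requires \(\abs{Z_\ell^{\p i} - Z} \ge \abs{Z}\) for some \(i\); splitting on \(\br{\abs{Z} < \delta}\) and \(\br{\abs{Z} \ge \delta}\), bounding the former by \(\bar\rho\delta\) and the latter by \(\delta^{-q}\E{\abs{Z_\ell - Z}^q} \lesssim \delta^{-q} 2^{-q\ell/2}\) via Markov, and optimising at \(\delta \sim 2^{-q\ell/2\p{q+1}}\) gives \(\prob{\dlant\p{Y} \neq 0} \lesssim 2^{-q\ell/2\p{q+1}}\). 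Because the exponent \(q\p{p+1}/2\p{q+1}\) is affine in \(p\), substituting both bounds reproduces the claimed rate throughout \(1 \le p < 2\). The delicate point is the threshold optimisation in this probability estimate, where the linear small-ball condition is exactly strong enough to balance the \(q\)th-moment decay and generate the characteristic exponent \(\p{p+1}/\p{q+1}\).
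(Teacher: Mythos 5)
Your proof is correct and, on the range \(2\le p\le q\), it is the same argument as the paper's: reduce to \(\tpayoff\equiv 0\) (your additive-cancellation identity is exactly the paper's mapping \(X\mapsto X-\pi\p{Y}\)), establish \(\E*{\abs*{\appE[\ell-1]\p{Y} - \nestE[1]\p{Y}}^q}\le C\,2^{-q\ell/2}\) by a conditional moment inequality (the paper cites the discrete Burkholder--Davis--Gundy inequality; your Marcinkiewicz--Zygmund inequality is the same tool in this i.i.d.\ setting), and then apply \Cref{lem:ant_general}. Where you genuinely depart from the paper is the range \(1\le p<2\). The paper's proof concludes by citing \Cref{lem:ant_general} for the full range \(1\le p\le q\) of the proposition, even though that lemma (and \Cref{lem:ant_general_smooth} behind it) is stated only for \(2\le p\le q\); as you correctly observe, Jensen's inequality from \(p=2\) yields only the exponent \(3qp\ell/4\p{q+1}\), which is strictly weaker than the claimed \(q\p{p+1}\ell/2\p{q+1}\) when \(p<2\), so the low-\(p\) case does need a separate argument. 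Your repair is sound: the H\"older splitting \(\E*{\abs*{\dlant\p{Y}}^p}\le \E*{\abs*{\dlant\p{Y}}^2}^{p/2}\prob{\dlant\p{Y}\neq 0}^{\p{2-p}/2}\) combined with the small-ball/Markov optimisation \(\prob{\dlant\p{Y}\neq 0}\le C 2^{-q\ell/2\p{q+1}}\) recombines exactly to the exponent \(q\p{p+1}\ell/2\p{q+1}\), and your threshold computation is correct. An equivalent repair, closer to the paper's own toolkit, is to note that the \(\psi\)-threshold argument in the proof of \Cref{lem:ant_general_smooth} never uses \(p\ge 2\) for the piecewise-linear part, so that lemma extends verbatim to \(1\le p<2\) for \(g\p{z}=\max\br{z,0}\). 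Two minor loose ends in your write-up, both harmless: the optimised threshold \(\delta\sim 2^{-q\ell/2\p{q+1}}\) must lie below \(\bar\delta\) before \Cref{assumpt:xmoms_exact} applies, which holds for all sufficiently large \(\ell\), the finitely many remaining levels being absorbed into \(b_1\); and the sign argument behind \(\prob{\dlant\p{Y}\neq 0}\) tacitly uses \(Z\neq 0\) almost surely, which itself follows from the small-ball condition.
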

\begin{proof}
	By mapping \(X\mapsto X - \pi\p{Y}\) if necessary, it can be assumed without loss of generality that \(\pi\p{Y} \equiv 0\). By the discrete Burkholder-Davis-Gundy inequality and under \Cref{assumpt:xmoms_exact}, it follows that there is \(c_q > 0\) depending only on \(q\) such that {\cite[Lemma 1]{gg19}, \cite[Lemma 2.5]{GilesHajiAli:2018}}
	\[
		\E*{\abs*{\appE[\ell - 1]\p{Y} - \nestE[1]\p*{Y}}^q} \le c_q \E*{\abs{X}^q} 2^{-q \ell / 2}.
	\]
	By \Cref{assumpt:xmoms_exact}, the conditions of \Cref{lem:ant_general} are met, which then provides the result.
\end{proof}

From \Cref{lem:antfexact}, it follows immediately that \(\var{\dlant\p{Y}}\) is of order \(2^{-3q\ell / 2\p{q+1}}\). Therefore, \(\var{\dlant\p{Y}}\text{Cost}\p{\dlant\p{Y}}\) is of order \(2^{\p{2 - q}\ell/2\p{q+1}}\). When \(q>2\), from MLMC complexity theory the estimator \eqref{eqn:mlmc} with correction terms \(\dlant\p{Y}\) has order \({\tol^{-2}}\) cost to attain root mean square error \(\tol\). In the case where \(q = 2\) and \(\E{\abs{X}^{2+\delta}} + \E{\abs{\pi\p{Y}}^{2+\delta}} = \infty\) for all \(\delta > 0\), it follows that \(\var{\dlant\p{Y}}\text{Cost}\p{\dlant\p{Y}}\) is of order \(1\) and the cost of estimating \(\nestE[0]\) is again of order \(\tol^{-2}\p{\log\tol}^2\). The antithetic differences \(\dlant\p{Y}\) rely on higher order moments to obtain lower variance. Hence, the condition \(q >2\) in \Cref{assumpt:xmoms_exact} is necessary to improve the asymptotic MLMC cost.  \\

 	\section{Antithetic Multilevel Monte Carlo}
		\label{sec:ant_approx}
		We now consider cases where either \(X\), \(Y\) or both require approximate sampling. In this setup, \(\appE\) is replaced with an MLMC estimator utilising multilevel simulation of \(X\). In addition, the correction terms \(\dlant\p{Y}\) are modified to account for varying levels of approximation in \(Y\), while retaining the telescoping property in \eqref{eqn:mlmc}.
		\subsection{Approximate Simulation of \(X\)}
			\label{sec:twostageexact}
			Consider for now the case where we have access to an exact simulator of \(Y\), but rely on a hierarchy of increasingly accurate approximations \(X\p{Y}\approx \Xl\p{Y}\) for integers \(k\ge 0\). Let the correction terms \(\dl[k]X = \dl[k]X\p{Y}\) be constructed such that
\[
	\E*{\dl[k]X\given Y} = 
	\begin{cases} 
	\E*{\Xl - \Xl[k - 1]\given Y} & k > 0\\
	\E*{\Xl[0]\given {Y}} & k = 0
	\end{cases}.
\]
To avoid confusion between the outer and inner layers of approximation, \(\ell\) is used to refer to the approximation of \(\nestE[1]\p{Y}\), and \(k\) denotes the approximation level of \(X\) given \(Y\). The following condition controls the error and cost of estimating \(\nestE[1]\p{Y}\): 
\begin{assumption}
	\label{assumpt:mlmcrates}
	There are constants \(a_0, a_1, a_2> 0\) and \(q > 2\) with \(\beta_p \ge 1\) and \(\lambda_p \ge 0\) defined for all \(p\le q\), each independent of \(k\), such that {for all \(k\ge0\)}
	\[
		\begin{aligned}
			\textnormal{Cost}\p*{X_k\p{Y}} &\le a_0 2^{k}\\
			\E*{\abs*{X_k\p{Y} - X\p{Y}}^q} &\le a_1 k^{\lambda_q} 2^{-qk/2}\\
			\E*{\abs*{\dl[k] X\p{Y}}^p} &\le a_2 k^{\lambda_p} 2^{-\beta_ppk/2}.
		\end{aligned}
	\]
\end{assumption}
{In certain cases, it is enough to take \(\dl[k]X = X_k - X_{k-1}\) to satisfy the above assumption. However, it is occasionally beneficial to use an alternative, unbiased estimator of \(X_k - X_{k-1}\) to increase \(\bq[p]\). An example where this is true is the use of the antithetic correction term \eqref{eqn:antf}, which has smaller moments than \eqref{eqn:dl_mcdiff}.} {The logarithmic factor \(\lambda_p\) is typically zero for examples considering only two nested expectations. However, non-trivial values of \(\lambda_p\) can arise when \(X\) contains further nested expectations, see \Cref{rem:two_stage_prop_recursive}. We keep the conditions in \Cref{assumpt:mlmcrates} fairly general to include many settings found in the literature including SDEs \cite{Giles2008MLMC}, PDEs \cite{Cliffe:2011,Elfverson:2016selectiverefinement}, further nested Monte Carlo estimates \cite{Bourgey20,GilesHajiAli:2018,syed2023optimal,zhou2022}, particle systems \cite{bhst22,ht18,brps23} and biased estimators of such settings.}\\

At level \(\ell\), a multilevel estimate of \(\nestE[1]\p{y}\), given \(Y = y\), is constructed through 
\begin{equation}
	\label{eqn:appE}
	\begin{aligned}
	\appEml\p*{y} &\defeq \sum_{k = 0}^{\ell} \p*{\Mlk}^{-1}\sum_{n=1}^{\Mlk} \dl[k]X^{(n)}\p{y},\\
		\Mlk &= \max\br{\Mlk[0][0]2^{\ell - \zeta k},1}
	\end{aligned}	
\end{equation}
for fixed \(1 \le \zeta \le \bq\) and where, {for each \(k = 0,\dots, \ell\), \(\br{\dl[k]X^{(n)}\p{y}}_{n= 1}^{\Mlk}\) are conditionally independent samples of \(\dl[k]X\p{y}\) given \(Y = y\).} The cost of generating a single sample of \(\appEml\) is 
\begin{equation}
\label{eqn:cost_dlant_ml}
\textnormal{Cost}\p*{\appEml} = \sum_{k=0}^{\ell} \Mlk\textnormal{Cost}\p*{X_{k}} \le \p{\Mlk[0][0]+1}a_0 2^{\ell}\cdot
\begin{cases}
1 & \textnormal{if } \zeta > 1\\
\ell + 1 & \textnormal{if } \zeta = 1
\end{cases}.
\end{equation}

Similar to \eqref{eqn:antf}, we construct an antithetic multilevel correction term by sampling
\begin{equation}
	\label{eqn:ant_mlmc_terms}
	\begin{aligned}
\antEfml\p*{Y} &\defeq \frac{1}{2}\p*{\antEcml[0]\p*{Y} + \antEcml[1]\p*{Y}}\\
		&\quad +  {\Mlk[\ell][\ell]}^{-1}\sum_{n=1}^{\Mlk[\ell][\ell]} \dl X^{(n)}\p*{Y} ,
	\end{aligned}
\end{equation}
where \(\antEcml[0]\p*{Y}\) and \(\antEcml[1]\p*{Y}\) are conditionally independent samples of \(\appEml[\ell-1]\p*{Y}\) given \(Y\). The antithetic difference is then defined for \(\ell > 0\) by
\begin{equation}
\label{eqn:dlantfexacty}
\dlantml\p{Y} \defeq \max\br*{\antEfml\p*{Y}, \tpayoff} - \frac{1}{2}\sum_{i=0}^1 \max\br*{\antEcml[i]\p*{Y}, \tpayoff}.
\end{equation}

It is possible to construct a multilevel estimator for \(\nestE[0]\) in \eqref{eqn:nestexp2} by using the correction terms \eqref{eqn:dlantfexacty} within the MLMC average \eqref{eqn:mlmc}. The bias induced by truncating the inner MLMC sums results in the fine component \(\antEfml\p{Y}\) containing samples of \(\dl X\) which do not appear in the corresponding coarse estimates \(\antEcml[i]\p*{Y}\). {This additional bias requires attention in the following analysis.}\\

The following result, adapted from the proof of \cite[Proposition 3.6]{Dereich2019}, is used frequently in what follows to analyse the antithetic MLMC difference:
\begin{lemma}
	\label{prop:zn_ind_ml}
	{Let \(\br{Z_k}_{k\in\nset}\) be a sequence of {\(\rset\)-valued} random variables with \(\E{Z_k} = 0\) and \(\E{Z_k^q} < \infty\) for some \(q \ge 2\). Let \(\br{\Ml[k]}_{k\in\nset}\) be a sequence of positive integers and let \(\{Z_k^{\p{n}}\}_{n=1}^{\Ml[k]}\) be independent samples of \(Z_k\) for each \(k\ge 0\).
	Then, there is \(c_q > 0\), depending only on \(q\), such that
	\[
	\E*{\abs*{\sum_{k=0}^\ell \Ml[k]^{-1} \sum_{n=1}^{\Ml[k]}  Z_k^{\p{n}}}^q} \le c_q\abs*{\sum_{k=0}^\ell \Ml[k]^{-1} \E*{\abs{Z_k}^q}^{2/q} }^{q/2}.
	\]}
\end{lemma}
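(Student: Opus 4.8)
The plan is to regard the double average as a single sum of independent, mean-zero summands and to apply a martingale moment inequality, thereby reducing the $q$-th moment to an $L^{q/2}$ estimate on the quadratic variation; Minkowski's inequality in $L^{q/2}$ then separates the individual contributions. Concretely, set $W_{k,n}\defeq\Ml[k]^{-1}Z_k^{\p{n}}$ so that the quantity of interest is $S\defeq\sum_{k=0}^{\ell}\sum_{n=1}^{\Ml[k]}W_{k,n}$, a finite sum of independent random variables, each with $\E{W_{k,n}}=0$ and finite $q$-th moment. Fixing any ordering of the index set $\br{\p{k,n}:0\le k\le\ell,\ 1\le n\le\Ml[k]}$ and forming the partial sums produces a martingale, so the discrete Burkholder--Davis--Gundy inequality (equivalently, the Marcinkiewicz--Zygmund inequality) supplies a constant $c_q>0$ depending only on $q$ with
\[
	\E*{\abs*{S}^q}\le c_q\,\E*{\p*{\sum_{k=0}^{\ell}\sum_{n=1}^{\Ml[k]}W_{k,n}^2}^{q/2}}.
\]

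For the second step I would bound the $L^{q/2}$ norm of the nonnegative quadratic variation $\sum_{k,n}W_{k,n}^2$. Since $q\ge2$ we have $q/2\ge1$, so Minkowski's inequality applies to this sum of nonnegative terms and gives
\[
	\E*{\p*{\sum_{k=0}^{\ell}\sum_{n=1}^{\Ml[k]}W_{k,n}^2}^{q/2}}^{2/q}\le\sum_{k=0}^{\ell}\sum_{n=1}^{\Ml[k]}\E*{\abs*{W_{k,n}}^q}^{2/q}.
\]
Each summand is computed directly from $W_{k,n}=\Ml[k]^{-1}Z_k^{\p{n}}$, namely $\E{\abs{W_{k,n}}^q}^{2/q}=\Ml[k]^{-2}\E{\abs{Z_k}^q}^{2/q}$; since there are $\Ml[k]$ identical terms at level $k$, the inner sum contributes $\Ml[k]^{-1}\E{\abs{Z_k}^q}^{2/q}$. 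Summing over $k$, raising to the power $q/2$, and substituting into the Burkholder--Davis--Gundy bound yields
\[
	\E*{\abs*{S}^q}\le c_q\abs*{\sum_{k=0}^{\ell}\Ml[k]^{-1}\E*{\abs*{Z_k}^q}^{2/q}}^{q/2},
\]
which is the claimed estimate.

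The argument is short, and the points requiring care are precisely where the hypotheses enter. First, the constant $c_q$ must be independent of $\ell$ and of the sample sizes $\Ml[k]$; this is exactly what the dimension-free constant in the Burkholder--Davis--Gundy (Marcinkiewicz--Zygmund) inequality provides, and is the only nontrivial ingredient. Second, the reduction to the quadratic variation uses that the $W_{k,n}$ are mean-zero and jointly independent across all pairs $\p{k,n}$, so that any ordering yields a martingale difference sequence. Third, the Minkowski step relies on $q/2\ge1$, so the assumption $q\ge2$ is essential and cannot be relaxed. I expect no genuine obstacle beyond invoking the correct form of the moment inequality with a universal constant.
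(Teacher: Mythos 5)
Your proof is correct, and it rests on the same two ingredients as the paper's proof --- the discrete Burkholder--Davis--Gundy (Marcinkiewicz--Zygmund) inequality and Minkowski's inequality in $L^{q/2}$ --- but organises them differently. You flatten the double sum into a single sum of jointly independent, mean-zero variables $W_{k,n}=M_k^{-1}Z_k^{(n)}$, apply BDG exactly once, apply Minkowski once, and finish with the elementary identity $\sum_{n=1}^{M_k}M_k^{-2}\,\mathsf{E}\!\left[|Z_k|^q\right]^{2/q}=M_k^{-1}\mathsf{E}\!\left[|Z_k|^q\right]^{2/q}$. The paper instead proceeds hierarchically: it first applies BDG to the outer sum whose increments are the block averages $M_k^{-1}\sum_{n}Z_k^{(n)}$, then Minkowski over $k$, and then BDG a second time within each block to bound the $q$-th moment of each block average. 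Both routes yield the stated estimate; yours is slightly more economical, using a single application of the martingale inequality and producing the constant $c_q$ where the paper's nested argument accumulates $c_q^2$, while the paper's block structure mirrors the level-by-level construction of the inner MLMC estimator. Your care about the hypotheses --- joint independence across all pairs $(k,n)$ so that any ordering gives a martingale difference sequence, and $q/2\ge 1$ for the Minkowski step --- is precisely where the assumptions enter in the paper's version as well, so there is no gap.
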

\begin{proof}
	By the discrete Burkholder-Davis-Gundy inequality there is \(c_q > 0\) depending only on \(q\) such that
	\[
	\begin{aligned}
		\E*{\abs*{\sum_{k=0}^\ell \Ml[k]^{-1} \sum_{n=1}^{\Ml[k]}  Z_k^{\p{n}}}^q}^{2/q}
		&\le c_q^{2/q}\E*{\abs*{\sum_{k=0}^\ell \abs*{\Ml[k]^{-1} \sum_{n=1}^{\Ml[k]}  Z_k^{\p{n}}}^2}^{q/2}}^{2/q}
	\end{aligned}
	\]
	Then, applying the Minkowski and Burkholder-Davis-Gundy inequalities for \(q\ge 2\) gives
	\[
		\begin{aligned}
		\E*{\abs{\overline{Z_{N_0,\dots, N_\ell}}}^q}^{2/q}
		&\le c_q^{2/q}\sum_{k=0}^\ell \E*{\abs*{\Ml[k]^{-1} \sum_{n=1}^{\Ml[k]} Z_k^{\p{n}}}^{q}}^{2/q}\\
		&\le c_q^{4/q}\sum_{k=0}^\ell \Ml[k]^{-1} \E*{\abs{Z_k}^q}^{2/q}.
		\end{aligned}
	\]
\end{proof}

From \Cref{prop:zn_ind_ml}, we arrive at the following bound on the error of the inner MLMC estimator \eqref{eqn:appE}.
\begin{lemma}
	\label{lem:2_stage_moments}
	Let \Cref{assumpt:mlmcrates,assumpt:xmoms_exact} hold for some \(q > 2\) with \(\lambda_q\ge 0\) and \(\bq \ge 1\). Consider the MLMC estimator \(\appEml\p*{Y}\) given by \eqref{eqn:appE} with inner sample sizes scaled according to \(1\le \zeta\le \bq\). Then, there is \(b_2 > 0\), independent of \(\ell\), such that 
	\[
		\E*{\abs*{\appEml\p*{Y} - \nestE[1]\p*{Y}}^q} \le b_2 \ell^{\lambda_q} 2^{-q\ell / 2} 
		\begin{cases}
			1			&	\zeta < \bq\\
			\ell^{q/2}	&	\zeta = \bq 
		\end{cases}.
	\]
\end{lemma}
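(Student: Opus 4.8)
The plan is to split the error $\appEml\p*{Y} - \nestE[1]\p*{Y}$ into a conditionally mean-zero statistical part and a deterministic-given-$Y$ bias part, bound each in $L^q$, and recombine. Since the correction terms satisfy $\E{\dl[k]X\given Y} = \E{X_k - X_{k-1}\given Y}$ for $k>0$, the conditional sum telescopes to $\E{\appEml\p*{Y}\given Y} = \E{X_\ell\given Y}$, so I would write
\[
\appEml\p*{Y} - \nestE[1]\p*{Y} = \underbrace{\p*{\appEml\p*{Y} - \E{X_\ell\given Y}}}_{\text{statistical}} + \underbrace{\p*{\E{X_\ell - X\given Y}}}_{\text{bias}}.
\]
For the bias, conditional Jensen together with the tower property and the second bound of \Cref{assumpt:mlmcrates} give $\E{\abs{\E{X_\ell - X\given Y}}^q} \le \E{\abs{X_\ell - X}^q} \le a_1 \ell^{\lambda_q} 2^{-q\ell/2}$, which already carries the $\ell^{\lambda_q}$ factor.

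For the statistical term I would set $Z_k = \dl[k]X\p{Y} - \E{\dl[k]X\p{Y}\given Y}$, which are conditionally mean-zero and conditionally independent given $Y$. Applying \Cref{prop:zn_ind_ml} to the conditional law given $Y$ bounds $\E{\abs{\cdot}^q\given Y}$ by $c_q$ times the $q/2$-power of $\sum_{k} \Mlk^{-1}\E{\abs{Z_k}^q\given Y}^{2/q}$. The crucial step is passing to the unconditional moment: since $q/2\ge 1$, Minkowski's inequality in $L^{q/2}$ exchanges the outer expectation with the sum, and the tower property then collapses $\bigl\|\E{\abs{Z_k}^q\given Y}^{2/q}\bigr\|_{L^{q/2}}$ to $\E{\abs{Z_k}^q}^{2/q}$, yielding $\E{\abs{\text{statistical}}^q} \le c_q\p*{\sum_{k=0}^\ell \Mlk^{-1}\E{\abs{Z_k}^q}^{2/q}}^{q/2}$. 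Thus the conditional version of \Cref{prop:zn_ind_ml} reduces to the same shape with unconditional moments. A centering ($c_r$) inequality and the third bound of \Cref{assumpt:mlmcrates} then give $\E{\abs{Z_k}^q} \le 2^q\E{\abs{\dl[k]X}^q} \le 2^q a_2 k^{\lambda_q}2^{-\bq q k/2}$.

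It remains to sum the series. Using that $\Mlk \ge \Mlk[0][0]2^{\ell - \zeta k}$ whenever the latter exceeds $1$ and $\Mlk = 1$ otherwise, one checks $\Mlk^{-1}\le \Mlk[0][0]^{-1}2^{\zeta k - \ell}$ in all cases, so the inner sum is controlled by $\Mlk[0][0]^{-1}2^{-\ell}\sum_{k=0}^\ell k^{2\lambda_q/q}2^{\p{\zeta - \bq}k}$. The hypothesis $1\le \zeta\le \bq$ splits into two cases: for $\zeta < \bq$ the geometric factor $2^{(\zeta-\bq)k}$ makes the polynomially weighted sum bounded uniformly in $\ell$, while for $\zeta = \bq$ the summand is $k^{2\lambda_q/q}$, giving a sum of order $\ell^{1 + 2\lambda_q/q}$. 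Raising to the power $q/2$ turns these into the factors $1$ and $\ell^{q/2}$, respectively, each multiplying $\ell^{\lambda_q}2^{-q\ell/2}$. Finally I would recombine the two parts with the $c_r$-inequality; note that when $\zeta < \bq$ the statistical error carries no logarithmic factor, so the $\ell^{\lambda_q}$ appearing in the stated bound is inherited from the (dominant, since $\lambda_q\ge 0$) bias term.

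The main obstacle is the conditional-to-unconditional passage: one must invoke \Cref{prop:zn_ind_ml} under the conditional measure and then justify moving the expectation inside the $L^{q/2}$ sum via Minkowski, rather than applying the lemma directly to non-centred, correlated samples. The remaining effort is careful bookkeeping of the polynomial weights $k^{\lambda_q}$ through the geometric sum so as to recover the precise extra $\ell^{q/2}$ factor in the critical case $\zeta = \bq$.
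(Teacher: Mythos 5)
Your proof is correct and takes essentially the same approach as the paper's: the identical decomposition into a conditionally centred statistical term plus the bias term \(\E*{X_\ell - X\given Y}\), with \Cref{prop:zn_ind_ml} bounding the former and the tower property plus Jensen's inequality bounding the latter, followed by the same geometric-sum case split over \(\zeta < \bq\) versus \(\zeta = \bq\). The only difference is one of rigour: you make explicit the conditional application of \Cref{prop:zn_ind_ml} given \(Y\) and the Minkowski-plus-tower passage to unconditional moments (needed since the inner samples are only conditionally independent given \(Y\)), a step the paper's proof performs implicitly.
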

\begin{proof}
	We have 
	\[
	\begin{aligned}
	\E*{\abs*{\appEml\p*{Y}  - \nestE[1]\p*{Y}}^q} &\le 2^{q-1}\bigg(
	\E*{\abs*{\appEml\p*{Y}  - \E*{X_{\ell}\given Y}}^q}  \\
	&\qquad + 
	\E*{\abs*{\E*{X_{\ell}\given Y} - \nestE[1]\p*{Y}}^q}\bigg)
	\end{aligned}
	\]
	For the first term, writing \(\E{X_{\ell}\given Y} = \sum_{k=0}^{\ell}\E{\dl[k] X\given Y}\) and using \Cref{prop:zn_ind_ml}
	\[
	\begin{aligned}
	\E*{\abs*{\appEml\p*{Y} - \E*{X_{\ell}\given Y}}^q} &= \E*{\abs*{\sum_{k=0}^{\ell} \p*{\Mlk[\ell]}^{-1}\sum_{n=1}^{\Mlk[\ell]}\p*{\dl[k]X^{\p{n}}\p{Y} - \E{\dl[k]X\given Y}} }^q}\\ 
	&\le c_q^{2}2^{q-1}\abs*{\sum_{k=0}^{\ell-1}\p*{\Mlk}^{-1}\E*{\abs{\dl[k]X}^q}^{2/q}}^{q/2}\\
	&\le c_q^{2}2^{q-1} \ell^{\lambda_q} 2^{-q\ell/2}
	\begin{cases}
	1			&	\zeta < \bq\\
	\ell^{q/2}	&	\zeta = \bq 
	\end{cases},
	\end{aligned}
	\]
	using \Cref{assumpt:mlmcrates} in the final step. On the other hand, combining the Tower Property, Jensen's inequality and \Cref{assumpt:mlmcrates} gives
	\[
	\begin{aligned}
	\E*{\abs*{\E*{X_{\ell}\given Y} - \nestE[1]\p*{Y}}^q} &\le \E*{\abs*{X_{\ell} - X}^q}\\
	&\le a_1 \ell^{\lambda_q} 2^{-q\ell/2},
	\end{aligned}
	\]
	completing the result, with \(b_2 = 2^{q-1}\p*{2^{q-1}c_q^2 + a_1}\).
\end{proof}

As in \Cref{sec:twostageexactmc}, the antithetic difference can improve the convergence of the multilevel difference in the following sense.
\begin{proposition}
	\label{lem:2stageant}
	Let \Cref{assumpt:mlmcrates,assumpt:xmoms_exact} hold for some \(q > 2\) and \(\bq\ge 1\). Consider the antithetic multilevel correction terms \(\dlantml\p{Y}\) defined by \eqref{eqn:ant_mlmc_terms} and \eqref{eqn:dlantfexacty} with inner sample sizes \(\Mlk \propto 2^{\ell-\zeta k}\) for some \(1\le \zeta\le \bq\). Then, there is \(b_3 > 0\), independent of \(\ell\), such that for \(0<p\le q\) we have
	\begin{equation}
	\label{eqn:dlantmlbound}
	\E*{\abs*{\dlantml\p{Y}}^p}  \le b_3 \max\br*{\ell^{\lambda_p}2^{-\bq[p]\ell p / 2}, \ell^{\mu_p} 2^{-q\p{1+p^{-1}}p\ell/2\p{q+1}}},
	\end{equation}
	where
	\begin{equation}
		\label{eqn:mu_p}
		\mu_p \defeq \frac{{p+1}}{{q+1}}\lambda_q + 
		\begin{cases}
			0 & \zeta < \bq\\
			q\p{p+1}/2\p{q+1}	&	\zeta = \bq 
		\end{cases}.
	\end{equation}
\end{proposition}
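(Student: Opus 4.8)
The plan is to reduce to the case \(\tpayoff \equiv 0\) exactly as in the proof of \Cref{lem:antfexact} (replacing \(X\) by \(X - \tpayoff\)); this leaves untouched both the quantity \(\E{\abs{\appEml[\ell-1]\p{Y} - \nestE[1]\p{Y}}^q}\) used below (coarse estimate and target shift together, so their difference is unchanged) and the level-\(\ell\) correction appearing only in the fine estimate, and it turns the hypothesis of \Cref{assumpt:xmoms_exact} into the linear-growth bound \(\prob{\abs{\nestE[1]\p{Y}} < x} \le \bar\rho x\) required by \Cref{lem:ant_general}. The heart of the argument is then to split \(\dlantml\p{Y}\) into a ``pure antithetic'' part and a remainder carrying the extra inner correction. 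Writing \(B \defeq \Mlk[\ell][\ell]^{-1}\sum_{n=1}^{\Mlk[\ell][\ell]} \dl X^{\p{n}}\p{Y}\) for that correction, the defining identity \(\antEfml\p{Y} = \tfrac12\p{\antEcml[0]\p{Y} + \antEcml[1]\p{Y}} + B\) yields the decomposition
\[
\dlantml\p{Y} = W + R, \qquad W \defeq \max\br*{\frac{\antEcml[0]\p{Y} + \antEcml[1]\p{Y}}{2}, 0} - \frac12\sum_{i=0}^1 \max\br*{\antEcml[i]\p{Y}, 0},
\]
with \(R \defeq \max\br{\antEfml\p{Y}, 0} - \max\br{\antEfml\p{Y} - B, 0}\) satisfying \(\abs{R} \le \abs{B}\) by the \(1\)-Lipschitz property of \(u \mapsto \max\br{u,0}\). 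I would bound the \(p\)-th moments of \(W\) and \(R\) separately and recombine via \(\abs{W+R}^p \le c_p\p{\abs{W}^p + \abs{R}^p}\).

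For \(W\), the key observation is that it is precisely the antithetic expression of \Cref{lem:ant_general} with \(Z = \nestE[1]\p{Y}\), \(Z_\ell = \appEml[\ell-1]\p{Y}\), and the two coarse copies \(\antEcml[i]\p{Y}\) playing the role of \(Z_\ell^{\p{i}}\). \Cref{assumpt:xmoms_exact} supplies the linear growth near \(0\) and \Cref{lem:2_stage_moments} supplies \(\E{\abs{Z_\ell - Z}^q} < \infty\), so for \(2 \le p \le q\),
\[
\E*{\abs{W}^p} \le b_0\, \E*{\abs*{\appEml[\ell-1]\p{Y} - \nestE[1]\p{Y}}^q}^{\p{p+1}/\p{q+1}}.
\]
Substituting the level-\(\p{\ell-1}\) bound of \Cref{lem:2_stage_moments} and simplifying the exponents gives the second term of \eqref{eqn:dlantmlbound}: the power \(\p{p+1}/\p{q+1}\) applied to \(2^{-q\ell/2}\) produces exactly \(2^{-q\p{p+1}\ell/2\p{q+1}} = 2^{-q\p{1+p^{-1}}p\ell/2\p{q+1}}\), while the logarithmic factor \(\ell^{\lambda_q}\) (resp. \(\ell^{\lambda_q + q/2}\) when \(\zeta = \bq\)) raised to \(\p{p+1}/\p{q+1}\) yields precisely \(\ell^{\mu_p}\) as in \eqref{eqn:mu_p}. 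The range \(0 < p < 2\) is then covered by the same Hölder-inequality argument underlying \Cref{lem:ant_general}, which in fact holds for all \(0 < p \le q\), or directly from \Cref{lem:ant_general_smooth}.

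For \(R\), since \(\abs{R} \le \abs{B}\) and \(B\) is an average of \(\Mlk[\ell][\ell]\) conditionally i.i.d. copies of \(\dl X\p{Y}\), Minkowski's inequality (for \(p \ge 1\); subadditivity of \(t \mapsto t^p\) for \(p < 1\)) gives \(\E{\abs{B}^p} \le C\, \E{\abs{\dl X\p{Y}}^p}\) with \(C\) bounded because \(\Mlk[\ell][\ell] = \max\br{\Mlk[0][0] 2^{\p{1-\zeta}\ell}, 1} \le \Mlk[0][0]\) for \(\zeta \ge 1\). \Cref{assumpt:mlmcrates} then yields \(\E{\abs{R}^p} \le C a_2\, \ell^{\lambda_p} 2^{-\bq[p] p\ell/2}\), exactly the first term of \eqref{eqn:dlantmlbound}; notably no variance reduction from averaging is needed — nor available, since \(\Mlk[\ell][\ell] = \Order{1}\) — so the single-sample moment already matches the required rate. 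Recombining the two bounds delivers \(\E{\abs{\dlantml\p{Y}}^p} \le b_3\max\br{\cdots}\).

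The main obstacle, and the place that repays care, is the bookkeeping of the truncation-bias term \(B\): one must confirm that the extra level-\(\ell\) inner correction present only in \(\antEfml\) contributes solely the slower ``first'' rate \(\ell^{\lambda_p} 2^{-\bq[p] p\ell/2}\) and does not contaminate the faster antithetic rate, and that the two regimes \(\zeta < \bq\) and \(\zeta = \bq\) propagate correctly through the fractional power \(\p{p+1}/\p{q+1}\) of \Cref{lem:2_stage_moments} to reproduce \(\mu_p\). A secondary technical point is treating all \(0 < p \le q\) uniformly — in particular the sub-\(2\) moments of \(W\), for which \Cref{lem:ant_general} as stated does not directly apply.
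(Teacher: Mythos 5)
Your proposal is correct and follows essentially the same route as the paper: the paper proves this result as the special case \(Y_\ell = Y\) of \Cref{lem:2stageant_fixedy}, whose proof uses exactly your decomposition into the pure antithetic part \(W\) (the paper's \(\overline{\Delta}_\ell^{(\textnormal{ML})}\), bounded via \Cref{lem:ant_general} together with \Cref{lem:2_stage_moments}) plus the truncation remainder \(R\) (bounded through the Lipschitz property of \(\max\) and \Cref{assumpt:mlmcrates}), with identical exponent bookkeeping. Your explicit handling of the moments \(0<p<2\), which \Cref{lem:ant_general} as stated does not cover, is a point the paper itself glosses over, and your observation that the underlying argument of \Cref{lem:ant_general_smooth} extends to all \(0<p\le q\) is the right fix.
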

This result follows immediately from the {more general} proof of \Cref{lem:2stageant_fixedy} in the following section. Note that the convergence {rate depends on} \(\bq[p]\) from \Cref{assumpt:mlmcrates}. If the differences \(\dl[k]X\) converge with \(\bq[p] = 1\), then \(\E*{\abs*{\dlantml\p{Y}}^p}\) is again of order \(2^{-\ell p / 2}\) {up to logarithmic factors} regardless of the antithetic difference. 
\begin{remark}
	\label{rem:two_stage_prop_recursive}
	A particular consequence of \Cref{lem:2stageant} is that the correction terms \(\dlantml\p{Y}\) and inner MLMC estimates \(\appEml\p{Y}\) themselves satisfy the conditions of \Cref{assumpt:mlmcrates} {with 
	\begin{equation}
	\label{eqn:bq_recursive}
	\begin{aligned}
		\lambda_p &\mapsto 
		\begin{cases}
			\lambda_p & \bq[p] < \frac{q}{q+1}\frac{p+1}{p}\\
			\mu_p 	& \textnormal{otherwise}
		\end{cases},\\
		\bq[p] &\mapsto \min\br*{\bq[p],\frac{q}{q+1}\frac{p+1}{p}}.
	\end{aligned}
	\end{equation}}
	This fact can be applied recursively to consider the case where \(X = \max\br{\nestE[2]\p{Y^\prime}, \pi\p{Y^\prime}}\) for  \(\nestE[2]\p{Y^\prime} = \E{X^\prime\given Y^\prime}\). Following \cite{zhou2022,syed2023optimal}, {it is possible to extend the above approach to deal with arbitrarily many nested expectations. We leave a rigorous analysis of this approach to future work.}
\end{remark}

\Cref{lem:2stageant} can be used to bound the cost of estimating \eqref{eqn:nestexp2} using {nested} MLMC simulation with exact simulation of \(Y\). A rigorous bound on the cost follows from the more general result in \Cref{cor:two_stage_complexity}.\\

{
	An alternate approach is employed in \cite{GilesHajiAliSpence:CVA,GilesHajiAli:2019sampling,zhou2022,syed2023optimal}, which each create an unbiased estimator of \(\nestE[1]\) by randomising the correction terms \(\dl[k]X\p{Y}\). The approach uses the observation \(\nestE[1]\p{Y} = \E*{\dl[k^\prime]X\p{Y}\ p_{k^\prime}^{-1}\given Y}\), where \(k^\prime\) is a random integer satisfying \(\prob{k^\prime = k} \propto 2^{-\zeta k}\) \cite{Rhee2015_Unbiased_estimation}. The methods in \Cref{sec:twostageexactmc} can then be applied to the unbiased estimator to approximate \(\nestE[0]\). In particular, from a batch of \(\Ml[\ell]\propto 2^{\ell}\) independent samples of \(\dl[k^\prime]X\p{Y}\ p_{k^\prime}^{-1}\), the proportion of samples which include realisations of \(\dl[k]X\) is of order \(2^\ell\E{\I{k^{\prime} = k}} \propto 2^{\ell-\zeta k}\). Hence, the parameter \(\zeta\) for the randomized probabilities has the same role as in \eqref{eqn:appE}. Under the conditions of \Cref{assumpt:mlmcrates} with \(q>2\) and for some \(2\le r \le q\), to ensure \(\E{\textnormal{Cost}\p{\dl[k^\prime]X\p{Y}}} < \infty\) and \(\E{\abs{\dl[k^\prime]X\p{Y} p_{k^\prime}^{-1}}^r} < \infty\) we require \cite[Remark 3.1]{GilesHajiAli:2019sampling}
	\[
		1 < \zeta < \frac{r}{r-1}\frac{\bq[r]}{2}.
	\]
For practical applications in which \(\bq[r]\le 2\), this inequality can restrict the maximum number of moments of the nested estimator. Particularly, ensuring the unbiased estimates have finite cost and variance necessitates \(\bq[2] > 1\). In contrast, under the conditions of \Cref{lem:2stageant}, the nested (biased) MLMC estimator has finite cost and \(q^\text{th}\) moment provided \(1\le \zeta\le \bq\). Furthermore, the biased estimate is affected by taking \(\zeta = \bq\) only up to an additional logarithmic factor in the rate \eqref{eqn:mu_p}, which has an asymptotic impact on the bound \eqref{eqn:dlantmlbound}, for \(p\le q\),  only when \(\bq[p] \ge q\p{1 + p^{-1}}/\p{q+1}\). Meanwhile, setting \(\zeta = 1\) within the biased estimate adds a logarithmic factor to the cost \eqref{eqn:cost_dlant_ml}.
	Hence, by avoiding the randomisation of \(k\), more control is gained over higher moments of the antithetic correction terms. Additionally, the biased method has finite cost and variance in the case when \(\bq[p] = 1\) for all \(2\le p\le q\) as is typical, for example, with Euler-Maruyama discretisation of an SDE.
}

		\subsection{Approximate Simulation of \(Y\)}
			\label{sec:twostageapprox}
			In many problems of interest, both \(X\) and \(Y\) depend on an underlying process which requires approximation. For an example, let \(\br{S_t}_{0\le t\le T}\) be the solution to an SDE which requires approximate sampling. Many problems of interest can be expressed in the form \eqref{eqn:nestexp2} for \(Y = S_{\tau}\) and \(X = \pi\p{S_T}\) depend on the underlying SDE at intermediate time \(\tau\) and maturity \(T\), respectively \cite{GilesHajiAliSpence:CVA,GilesHajiAli:2019sampling,zhou2022,syed2023optimal}. \\

The results of \Cref{sec:twostageexact} can be naturally extended to this more general case by altering \Cref{assumpt:mlmcrates} to account for approximate samples of \(Y\):
\begin{assumption}
	\label{assumpt:twostagegenY}
	There is positive constants \(a_0, a_1, a_2\)  and \(\lambda_p \ge 0, \bq[p] \ge 1\) defined for \(p\le q\), each independent of \(\ell\) and \(k\) such that for all \(k\ge 0\)
	\[
	\begin{aligned}
	\textnormal{Cost}\p*{X_k\p{{y}}} &\le a_0 2^k \quad {\textnormal{for all } y\in\rset^d}\\
	\E*{\abs*{X_k\p{Y_\ell} - X\p{Y_\ell}}^q} &\le a_1k^{\lambda_q} 2^{-qk/2}\\
	\E*{\abs*{\dl[k]X\p{Y_\ell}}^p} &\le a_2k^{\lambda_p} 2^{-\bq[p]pk/2}.
	\end{aligned}
	\]
\end{assumption}

{Since \(Y\) requires approximation, we must impose additional bounds on the convergence of \(Y_\ell\). Errors in approximating \(Y\) are carried into approximations of \(X\). Thus, we are interested in bounding how errors in \(Y\) affect errors in approximations \(X_k\). In particular, we consider the double differences 
\[
	\begin{aligned}
	\dl{\dl[k]{X\p{Y_\ell; Y_{\ell-1}}}} &\defeq \dl[k]{X\p{Y_\ell}}  - \dl[k]{X\p{Y_{\ell-1}}}.
	\end{aligned}
\]}
Suppose that the error made in approximating \(X\) does not affect the quality of approximations \(Y\) in the following sense:
\begin{assumption}
	\label{assumpt:twostageY}
	There is \(\hat a_0, a_3, a_4, a_5 > 0\) and \(\bqy[p]\ge 1\) {depending on \(p\le q\), each independent of \(\ell, k\)} such that for all \(\ell, {k} \ge 0\) 
	\begin{align}
			\textnormal{Cost}\p*{Y_\ell} &\le \hat a_0 2^\ell\nonumber\\
			\E*{\abs*{\tpayoff[Y] - \tpayoff[Y_{\ell}]}^p} &\le a_32^{-\bqy[p]p\ell/2}\label{eqn:y_pi_err}\\
			\E*{\abs*{X_k\p*{Y} - X_k\p*{Y_{\ell}}}^p} &\le a_42^{-\bqy[p]p\ell/2}\label{eqn:dxk_y}\\
			{\E*{\abs*{\dl{\dl[k]{X\p{Y_\ell; Y_{\ell-1}}}}}^p}} &\le {a_5 2^{-\bqy[p]p\ell/2}}.\label{eqn:double_diff_bound}
	\end{align}
\end{assumption} 
The inequality \eqref{eqn:y_pi_err} imposes an implicit condition on \(\pi\). Typically, a bound is available of the form
\[
	\E*{\norm*{Y - Y_{\ell}}^p} \le \hat a_32^{-\bqy[p]p\ell/2},
\]
where \(\norm{\cdot}\) represents the Euclidean norm here and throughout the paper.
Provided \(\pi\) is Lipschitz or sufficiently smooth with bounded higher order derivatives, then \eqref{eqn:y_pi_err} follows immediately from the bound on \(\norm{Y-Y_\ell}\). {The condition in \Cref{eqn:dxk_y} forms a stability assumption (in the \(L^p\) sense) on the approximations \(X_k\) given an error in the intermediate value \(Y\approx Y_\ell\). Additionally, \eqref{eqn:double_diff_bound} places an extra condition on the correction terms \(\dl[k]X\p{y}\), {which is required in the proof of \Cref{lem:2stageanty} to bound the error between \(\appE\p{Y}\) and \(\appE\p{Y_\ell}\)}. If \(\dl[k]X\p{y}\) {is a difference of} samples \(X_k\p{y}\) and \(X_{k-1}\p{y}\), then \eqref{eqn:dxk_y} implies \eqref{eqn:double_diff_bound} by {the triangle} inequality. Note that under \Cref{assumpt:mlmcrates,assumpt:twostagegenY,assumpt:twostageY} it follows from Jensen's inequality that there is \(a_6 > 0\) independent of \(\ell\) such that for any \(\ell, k \ge 0\)
\begin{equation}
	\label{eqn:dx_y}
	\begin{aligned}
	\E*{\abs{X\p{Y} - X\p{Y_\ell}}^q} &\le 3^{q-1}\bigg( \E*{\abs{X\p{Y} - X_k\p{Y}}^q}\\
	&\qquad + \E*{\abs{X_k\p{Y} - X_k\p{Y_\ell}}^q}\\
	&\qquad +   \E*{\abs{X_k\p{Y_\ell} - X\p{Y_\ell}}^q} \bigg)\\
	&\le a_6\p*{k^{\lambda_q}2^{-q k /2} + 2^{-\bqy q \ell / 2}}\\
	&\le 2 a_62^{-q \ell / 2}.
	\end{aligned}
\end{equation}
The last line follows from taking \(k\) large enough so that \(k^{\lambda_q}2^{-q k /2}\le 2^{-q\ell/2}\) and since \(\bqy\ge 1\) in \Cref{assumpt:twostagegenY}. This bound is useful in \Cref{lem:2stageant_fixedy}.\\}

For the SDE example, let \(X_{k}\p*{Y_\ell} = \pi\p{S_{T, \ell,k}}\) approximate \(S_T\) using Euler-Maruyama or Milstein discretisation with order \(2^{\ell}\) steps to approximate \(Y = S_\tau\) on \([0, \tau]\)  and \(2^{k}\) steps on \([\tau, T]\). Then \Cref{assumpt:twostageY,assumpt:twostagegenY} hold with \(\bq[p] = \bqy[p] = 1\) for Euler-Maruyama and \(\bq[p] = \bqy[p] = 2\) for Milstein discretisation under commutativity and growth conditions on the coefficients of the underlying SDE and a Lipschitz condition on \(\pi\p{\cdot}\). In further applications, such as when \(X\) depends on deeper nested expectations, it may happen that \(\bq[p]\) differs from \(\bqy[p]\). \\

{In some settings, such as when Milstein simulation of an underlying SDE requires sampling L\'evy areas, the value of \(\bqy[p]\) can be limited within \Cref{assumpt:twostageY} \cite{giles14antmilstein,cc80}. In the next section, we consider a different set of assumptions which use antithetic approximations of \(Y\) that can be used to obtain larger values of \(\bqy[p]\) in certain practical examples.\\}

Let \(Y_{\ell}\) and \(Y_{\ell-1}\) be a sufficiently correlated pair of approximations to \(Y\) satisfying \Cref{assumpt:twostageY}. For \(\ell^\prime\in \br{\ell-1,\ell}\) define
\begin{equation}
	\label{eqn:nested_mlmc_y}
	\begin{aligned}
		\antEcmly[j]\p{Y_{\ell^\prime}} &\defeq \sum_{k=0}^{\ell-1}\frac{1}{\Mlk[\ell][k][1]}\sum_{n=1}^{\Mlk[\ell][k][1]} \dl[k]{X^{(j, n)}\p{Y_{\ell^\prime}}} \\
		\antEfmly\p{Y_{\ell^\prime}} &= \frac{1}{2}\p*{\antEcmly[0]\p{Y_{\ell^\prime}}+\antEcmly[1]\p{Y_{\ell^\prime}}}\\
		&\quad + \frac{1}{\Mlk[\ell][\ell][1]}\sum_{n=1}^{\Mlk[\ell][\ell][1]} \dl{X^{(n)}\p{Y_{\ell^\prime}}},
	\end{aligned}
\end{equation}
{where \(\br{\dl[k]{X^{(j, n)}\p{Y_{\ell^\prime}}}}_{k, j, n}\) are conditionally independent samples of \(\dl[k]{X\p{Y_{\ell^\prime}}}\) and \(\br{\dl{X^{(n)}\p{Y_{\ell^\prime}}}}_n\) are conditionally independent samples of \(\dl{X\p{Y_{\ell^\prime}}}\), each satisfying \eqref{eqn:double_diff_bound}.}
In this way, consider the antithetic correction term
\begin{equation}
\label{eqn:dlantfy}
\begin{aligned}
\dlantmly\p*{Y_\ell, Y_{\ell-1}} &\defeq \max\br*{\antEfmly\p{Y_{\ell}}, \tpayoff[Y_\ell]}\\
&\quad - \frac{1}{2}\sum_{i=0}^1\max\br*{\antEcmly[i]\p*{Y_{\ell-1}}, \tpayoff[Y_{\ell-1}]},
\end{aligned}
\end{equation}
which resembles \eqref{eqn:dlantfexacty} with a slight offset caused by approximating \(Y\) at level \(\ell\) for the fine term and at level \(\ell-1\) for both coarse terms. Since \(Y_\ell\) is sampled only once for each inner MLMC approximation, it follows that
\begin{equation}
	\label{eqn:dlmly_cost}
	\textnormal{Cost}\p*{\dlantmly} \le a_02^\ell +  \Mlk[0][0]a_0 2^{\ell}\cdot
	\begin{cases}
	1&\zeta > 1\\
	\ell + 1 & \zeta = 1
	\end{cases}.
\end{equation}
Before extending \Cref{lem:2stageant} to the correction terms \(\dlantmly\), we first prove the following result, which is a generalisation of \Cref{lem:2stageant} using the same approximation of \(Y\) for the fine and coarse estimators. {Note that the following result does not impose \Cref{assumpt:twostageY} and instead relies on the conclusion of \eqref{eqn:dx_y}. This is helpful in the following section, which considers an alternative set of assumptions.}
\begin{proposition}
	\label{lem:2stageant_fixedy}
	Let \Cref{assumpt:xmoms_exact,assumpt:mlmcrates,assumpt:twostagegenY} hold for some \(q > 2\) and \(\bq\ge 1\). {Assume further that \(\E{\abs{X\p{Y} - X\p{Y_\ell}}^q}\le a_6 2^{-q \ell / 2}\) }for some \(a_6 > 0\). Consider the antithetic multilevel correction terms \(\dlantml\) defined by \eqref{eqn:ant_mlmc_terms} and \eqref{eqn:dlantfexacty} with inner sample sizes \(\Mlk \propto 2^{\ell-\zeta k}\) for some \(1\le \zeta\le \bq\). Then, there is \(b_3 > 0\), independent of \(\ell\), such that for \(0<p\le q\) we have
	\[
	\E*{\abs*{\dlantml\p{Y_\ell}}^p}  \le b_3 \max\br*{\ell^{\lambda_p}2^{-\bq[p]\ell p / 2}, \ell^{\mu_p} 2^{-q\p{1+p^{-1}}p\ell/2\p{q+1}}},
	\]
	where
	\[
	\mu_p \defeq \frac{{p+1}}{{q+1}}\lambda_q + 
	\begin{cases}
	0 & \zeta < \bq\\
	q\p{p+1}/2\p{q+1}	&	\zeta = \bq 
	\end{cases}.
	\]
\end{proposition}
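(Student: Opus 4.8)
The plan is to split $\dlantml\p{Y_\ell}$ into a bias term --- arising because the fine estimator $\antEfml\p{Y_\ell}$ carries an extra level-$\ell$ inner sum absent from the two coarse terms --- and a pure antithetic term handled by \Cref{lem:ant_general}. As in the proof of \Cref{lem:antfexact}, mapping $X\mapsto X - \tpayoff[Y_\ell]$ lets me assume $\tpayoff[Y_\ell]\equiv 0$; this is legitimate since the \emph{same} approximation $Y_\ell$ enters the fine and both coarse terms of \eqref{eqn:dlantfexacty}. Writing $A = \antEcml[0]\p{Y_\ell}$ and $B = \antEcml[1]\p{Y_\ell}$ for the two conditionally independent copies of $\appEml[\ell-1]\p{Y_\ell}$, and $D = \antEfml\p{Y_\ell} - \tfrac12\p{A+B}$ for the extra level-$\ell$ average from \eqref{eqn:ant_mlmc_terms}, I would insert $\pm\max\br{\tfrac{A+B}{2},0}$ to obtain
\begin{align*}
\dlantml\p{Y_\ell} &= \underbrace{\left(\max\br*{\tfrac{A+B}{2}+D,\, 0} - \max\br*{\tfrac{A+B}{2},\, 0}\right)}_{(\mathrm I)}\\
&\quad + \underbrace{\left(\max\br*{\tfrac{A+B}{2},\, 0} - \tfrac12\max\br*{A,0} - \tfrac12\max\br*{B,0}\right)}_{(\mathrm{II})}.
\end{align*}

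For $(\mathrm I)$, since $t\mapsto\max\br{t,0}$ is $1$-Lipschitz we have $\abs{(\mathrm I)}\le\abs{D}$, and Minkowski's inequality (for $p\ge1$; Jensen's inequality handles $0<p<1$ analogously) gives $\E*{\abs*{(\mathrm I)}^p}\le\E*{\abs*{\dl X\p{Y_\ell}}^p}$. By \Cref{assumpt:twostagegenY} with $k=\ell$ this is at most $a_2\ell^{\lambda_p}2^{-\bq[p]p\ell/2}$, which is the first entry of the claimed maximum.

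For $(\mathrm{II})$ I would apply \Cref{lem:ant_general} conditionally, taking $Z = \nestE[1]\p{Y}$ --- which satisfies the anti-concentration bound \eqref{eqn:cdf_bound_general} by \Cref{assumpt:xmoms_exact} --- and taking $A,B$ as the two samples of the approximation $Z_\ell = \appEml[\ell-1]\p{Y_\ell}$. For $2\le p\le q$ this yields $\E*{\abs*{(\mathrm{II})}^p}\le b_0\,\E*{\abs*{Z_\ell - Z}^q}^{\p{p+1}/\p{q+1}}$, while the range $0<p<2$ follows from the general \Cref{lem:ant_general_smooth}. The key estimate is therefore $\E*{\abs*{\appEml[\ell-1]\p{Y_\ell} - \nestE[1]\p{Y}}^q}$, which I would bound exactly as in \Cref{lem:2_stage_moments} after splitting into: (i) the inner statistical error $\appEml[\ell-1]\p{Y_\ell} - \E*{X_{\ell-1}\p{Y_\ell}\given Y_\ell}$, controlled by \Cref{prop:zn_ind_ml} and the moment rate of \Cref{assumpt:twostagegenY}; (ii) the truncation bias $\E*{X_{\ell-1}\p{Y_\ell} - X\p{Y_\ell}\given Y_\ell}$, controlled by Jensen's inequality and the strong-error rate of \Cref{assumpt:twostagegenY}; and (iii) the intermediate-state error $\E*{X\p{Y_\ell}\given Y_\ell} - \nestE[1]\p{Y}$, which by Jensen's inequality (conditioning on $(Y,Y_\ell)$) is at most $\E*{\abs*{X\p{Y} - X\p{Y_\ell}}^q}\le a_6 2^{-q\ell/2}$ from the additional hypothesis (equivalently \eqref{eqn:dx_y}). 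Each contributes $\ell^{\lambda_q}2^{-q\ell/2}$, with an extra factor $\ell^{q/2}$ in (i) when $\zeta=\bq$. Substituting this and using $q\p{1+p^{-1}}p/2\p{q+1} = q\p{p+1}/2\p{q+1}$ gives $\E*{\abs*{(\mathrm{II})}^p}\lesssim\ell^{\mu_p}2^{-q\p{p+1}\ell/2\p{q+1}}$, the second entry of the maximum. Combining $(\mathrm I)$ and $(\mathrm{II})$ through $\abs{a+b}^p\le 2^p\max\br{\abs a^p,\abs b^p}$ (subadditivity when $p<1$) then closes the bound.

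The main obstacle is the analysis of $(\mathrm{II})$. The delicate point is that the anti-concentration hypothesis in \Cref{assumpt:xmoms_exact} constrains the exact-$Y$ quantity $\nestE[1]\p{Y}$, whereas the antithetic pair $A,B$ approximate it only \emph{through} the approximate intermediate state $Y_\ell$; one must verify that \Cref{lem:ant_general} is insensitive to the origin of the approximation error and, crucially, that the $Y$-discretisation does not spoil the rate $2^{-q\ell/2}$ in $\E*{\abs*{Z_\ell-Z}^q}$. This is precisely what the bound $\E*{\abs*{X\p{Y}-X\p{Y_\ell}}^q}\le a_6 2^{-q\ell/2}$ secures, and is why it --- rather than \Cref{assumpt:twostageY} --- is assumed here. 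A secondary technical point is that \Cref{lem:ant_general} is stated only for $p\ge2$, so extending $(\mathrm{II})$ to $0<p<2$ relies on the more general \Cref{lem:ant_general_smooth} (or on interpolating the $p=2$ estimate against the probability of the sign-straddle event on which $(\mathrm{II})$ is supported).
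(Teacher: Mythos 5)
Your proposal is correct and follows the paper's own proof essentially step for step: the same decomposition of \(\dlantml\p{Y_\ell}\) into the extra level-\(\ell\) inner sum (bounded via the Lipschitz property of the max and \Cref{assumpt:twostagegenY}) plus the pure antithetic difference, to which \Cref{lem:ant_general} is applied with \(Z_\ell = \appEml[\ell-1]\p{Y_\ell}\) and \(Z = \nestE[1]\p{Y}\), the key moment \(\E*{\abs*{Z_\ell - Z}^q}\) being bounded exactly as you describe by \Cref{lem:2_stage_moments} together with the hypothesis on \(\E*{\abs*{X\p{Y}-X\p{Y_\ell}}^q}\) via Jensen's inequality and the tower property. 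The only differences are cosmetic: you unpack \Cref{lem:2_stage_moments} into its constituent estimates rather than citing it wholesale, and you explicitly flag the \(0<p<2\) case and the payoff-recentring subtlety, both of which the paper itself passes over silently.
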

\begin{proof}
	As in the proof of \Cref{lem:antfexact}, assume without loss of generality (taking \(X \mapsto X - \pi\p{Y}\)) that \(\pi\p{Y} = 0\). Define 
	\[
	\begin{aligned}
	\antEfmlun\p{Y_\ell} &\defeq \frac{1}{2}\p*{\antEcml[0]\p{Y_\ell} + \antEcml[1]\p{Y_\ell}}\\
	\overline{\Delta}_\ell^{\p{\textnormal{ML}}}\p{Y_\ell} &\defeq \max\br*{\antEfmlun\p{Y_\ell}, 0} - \frac{1}{2}\sum_{i=0}^1\max\br*{\antEcml[i]\p{Y_\ell}, 0}
	\end{aligned}
	\]
	so that \(\overline{\Delta}_\ell^{\p{\textnormal{ML}}}\p{Y_\ell}\) vanishes whenever \(\antEcml[0]\p{Y_\ell}\) and \(\antEcml[1]\p{Y_\ell}\) fall on the same side of \(0\) as in \Cref{sec:twostageexactmc}. Then,
	\[
	\begin{aligned}
	\E*{\abs*{\dlantml\p{Y_\ell}}^p} &\le 2^{p-1}\bigg(\E*{\abs*{\max\br*{\antEfml\p{Y_\ell}, 0} - \max\br*{\antEfmlun\p{Y_\ell}, 0}}^p } \\
	&\quad + \E*{\abs*{\overline{\Delta}_\ell^{\p{\textnormal{ML}}}\p{Y_\ell}}^p}\bigg).
	\end{aligned}
	\]
	The former term satisfies
	\[
	\begin{aligned}
	\E*{\abs*{\max\br*{\antEfml, 0} - \max\p*{\antEfmlun, 0}}^p }  &\le \E*{\abs*{{\antEfml} - {\antEfmlun}}^p }\\
	&\le \E*{\abs*{\dl X\p{Y_\ell}}^p}\\
	&\le a_2 \ell^{\lambda_p} 2^{-\beta_pp\ell/2},
	\end{aligned}
	\]
	where we use \Cref{assumpt:twostagegenY} in the final step. On the other hand, by \Cref{assumpt:xmoms_exact,assumpt:twostagegenY} and \Cref{lem:2_stage_moments}, the antithetic difference \(\overline{\Delta}_\ell^{\p{\textnormal{ML}}}\p{Y}\) satisfies the conditions of \Cref{lem:ant_general}. Hence, it follows for \(\mu_p\) as above that
	\[
	\begin{aligned}
	&\E*{\abs*{\overline{\Delta}_\ell^{\p{\textnormal{ML}}}\p{Y_\ell}}^p}\\
	 &\le b_0 \E*{\abs*{\appE[\ell][1]\p*{Y_\ell} - \nestE[1]\p*{Y}}^q}^{\p{p+1}/\p{q+1}}\\
	&\le b_02^{q-1}\p*{\E*{\abs*{\appE[\ell][1]\p*{Y_\ell} - \nestE[1]\p*{Y_\ell}}^q} + \E*{\abs*{\nestE[1]\p*{Y_\ell} - \nestE[1]\p*{Y}}^q}}^{\p{p+1}/\p{q+1}}\\
	&\le b_02^{q-1} \p{b_2 + a_6}^{\p{p+1}/\p{q+1}} \ell^{\mu_p} 2^{-q\p{1 + p^{-1}}p\ell / 2\p{q+1}}.
	\end{aligned}
	\]
	{In the final line above we use \Cref{lem:2_stage_moments} under \Cref{assumpt:twostagegenY} combined with the fact
	\[
		\begin{aligned}
			\E*{\abs*{\nestE[1]\p*{Y_\ell} - \nestE[1]\p*{Y}}^q} &= \E*{\abs*{\E*{X\p{Y_\ell} - X\p{Y}\given Y_,\ Y_\ell}}^q}\\
			&\le \E*{\abs*{X\p{Y_\ell} - X\p{Y}}^q}\\
			&\le a_62^{-q\ell/2}
		\end{aligned}
	\]
	by hypothesis, where the second line follows from Jensen's inequality and the tower property.}
\end{proof}
By setting \(Y = Y_\ell\), \Cref{lem:2stageant_fixedy} proves \Cref{lem:2stageant}.
The following result, based on \cite[Theorem A.2]{GilesHajiAliSpence:CVA}, extends \Cref{lem:2stageant_fixedy} to account for different approximations to \(Y\) at the fine and coarse levels.
\begin{proposition}
	\label{lem:2stageanty}
	Let \Cref{assumpt:xmoms_exact,assumpt:mlmcrates,assumpt:twostagegenY,assumpt:twostageY} hold for some \(q>2\), \(\bq[p], \bqy[p] \ge 1\) and \(\lambda_p \ge 0\). Consider the antithetic multilevel correction terms \(\dlantmly\) defined through \eqref{eqn:nested_mlmc_y} and \eqref{eqn:dlantfy} with inner sample sizes \(\Mlk\propto 2^{\ell-\zeta k}\) for some \(1\le \zeta\le \bq\). Then, there is \(b_4>0\), independent of \(\ell\), such that
	\[
	\E*{\abs*{\dlantmly}^p} \le b_4\max\br*{\ell^{\nu_p}2^{-\bqy[p]\ell p / 2}, \ell^{\lambda_p}2^{-\bq[p]\ell p / 2}, \ell^{\mu_p}2^{-q\p{1+p^{-1}}p\ell/2\p{q+1}}},
	\]
	for \(p\le q\), where \(\mu_p\) is given by \eqref{eqn:mu_p} and
	\begin{equation}
		\label{eqn:nu_p}
		\nu_p \defeq 
		\begin{cases}
			0	& \zeta = 1\\
			\ell^{p/2} &	\zeta > 1
		\end{cases}.
	\end{equation}
\end{proposition}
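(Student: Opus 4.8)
The plan is to compare $\dlantmly$ against the single-$Y$ correction $\dlantml\p{Y_\ell}$ from \Cref{lem:2stageant_fixedy}, in which both the fine and the coarse inner estimators are evaluated at the same approximation $Y_\ell$. Writing
\[
	\dlantmly\p*{Y_\ell, Y_{\ell-1}} = \dlantml\p{Y_\ell} + \p*{\dlantmly\p*{Y_\ell, Y_{\ell-1}} - \dlantml\p{Y_\ell}}
\]
and applying Minkowski's inequality, the first term is handled directly by \Cref{lem:2stageant_fixedy}: its hypothesis $\E*{\abs*{X\p{Y} - X\p{Y_\ell}}^q} \le a_6 2^{-q\ell/2}$ is exactly the conclusion of \eqref{eqn:dx_y}, which holds under \Cref{assumpt:mlmcrates,assumpt:twostagegenY,assumpt:twostageY}, and it produces precisely the second and third entries of the claimed maximum. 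It therefore remains to show that the correction term is of order $\ell^{\nu_p}2^{-\bqy[p]p\ell/2}$, which will supply the first entry.

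First I would observe that the fine components of $\dlantmly$ and $\dlantml\p{Y_\ell}$ coincide, since both evaluate $\antEfml$ at $Y_\ell$ using the same inner samples; the two fine maxima therefore cancel exactly and only the coarse maxima survive,
\[
	\dlantmly\p*{Y_\ell, Y_{\ell-1}} - \dlantml\p{Y_\ell} = -\frac12\sum_{i=0}^1\p*{\max\br*{\antEcmly[i]\p{Y_{\ell-1}}, \tpayoff[Y_{\ell-1}]} - \max\br*{\antEcmly[i]\p{Y_\ell}, \tpayoff[Y_\ell]}}.
\]
Since $\p{u,v}\mapsto\max\br{u,v}$ is $1$-Lipschitz in each argument, each summand is bounded in absolute value by $\abs*{\antEcmly[i]\p{Y_{\ell-1}} - \antEcmly[i]\p{Y_\ell}} + \abs*{\tpayoff[Y_{\ell-1}] - \tpayoff[Y_\ell]}$. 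The payoff difference has $p$-th moment of order $2^{-\bqy[p]p\ell/2}$ by \eqref{eqn:y_pi_err} and the triangle inequality through the exact value $\tpayoff$, so the whole difficulty concentrates on the inner-estimator difference $\antEcmly[i]\p{Y_{\ell-1}} - \antEcmly[i]\p{Y_\ell}$.

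For that term I would use that, by \eqref{eqn:nested_mlmc_y} and the definition of the double difference,
\[
	\antEcmly[i]\p{Y_{\ell-1}} - \antEcmly[i]\p{Y_\ell} = -\sum_{k=0}^{\ell-1}\frac{1}{\Mlk}\sum_{n=1}^{\Mlk}\dl{\dl[k]{X^{(i,n)}\p{Y_\ell; Y_{\ell-1}}}},
\]
a multilevel average of the double differences controlled by \eqref{eqn:double_diff_bound}. As these are not centred, I would split each into its conditional mean $\mu_k \defeq \E*{\dl{\dl[k]{X\p{Y_\ell; Y_{\ell-1}}}}\given Y_\ell, Y_{\ell-1}}$ and a conditionally mean-zero fluctuation. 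The crucial point is that $\sum_{k=0}^{\ell-1}\mu_k$ telescopes to $\E*{X_{\ell-1}\given Y_\ell} - \E*{X_{\ell-1}\given Y_{\ell-1}}$, whose $p$-th moment is of order $2^{-\bqy[p]p\ell/2}$ by Jensen's inequality and \eqref{eqn:dxk_y} applied through the exact value $Y$; without this telescoping a naive Minkowski bound over the $\ell$ inner levels would cost a spurious factor $\ell^p$. For the fluctuation I would apply \Cref{prop:zn_ind_ml} conditionally on $\p{Y_\ell, Y_{\ell-1}}$ and then take the outer expectation by a further Minkowski step in $L^{p/2}$, obtaining a bound of order $2^{-\bqy[p]p\ell/2}\p*{\sum_{k=0}^{\ell-1}\Mlk^{-1}}^{p/2}$ since \eqref{eqn:double_diff_bound} is independent of $k$. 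The capped weights $\Mlk = \max\br{\Mlk[0][0]2^{\ell-\zeta k},1}$ satisfy $\sum_{k=0}^{\ell-1}\Mlk^{-1} = \Order{1}$ when $\zeta = 1$ and $\Order{\ell}$ when $\zeta > 1$, which is exactly the origin of the factor $\ell^{\nu_p}$ in \eqref{eqn:nu_p}.

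I expect the main obstacle to be this final fluctuation estimate, handled uniformly over $0 < p \le q$: \Cref{prop:zn_ind_ml} requires $p \ge 2$, so for $p < 2$ I would first pass to the $L^2$ norm, and in applying it conditionally I must check that the double differences are conditionally independent across $k$ and $n$ given $\p{Y_\ell, Y_{\ell-1}}$. Once the payoff, telescoped-mean, and fluctuation contributions are combined with the bound from \Cref{lem:2stageant_fixedy}, a single application of Minkowski's inequality and a sufficiently large choice of $b_4$ absorbs all constants and yields the stated maximum.
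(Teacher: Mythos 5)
Your proposal is correct and is essentially the paper's own argument in mirror image: the paper anchors the decomposition at the coarse approximation, writing \(\dlantmly = \dl{^{(1)}} + \dlantml\p{Y_{\ell-1}}\) so that the \(Y\)-difference falls on the fine components, whereas you anchor at \(\dlantml\p{Y_\ell}\) so that it falls on the coarse components; either way the same ingredients finish the proof (\Cref{lem:2stageant_fixedy} justified via \eqref{eqn:dx_y}, the Lipschitz splitting of the max, \eqref{eqn:y_pi_err} for the payoff difference, telescoping of the double-difference means bounded through \eqref{eqn:dxk_y}, and \Cref{prop:zn_ind_ml} with the count \(\sum_{k}\Mlk^{-1}\) producing the factor \(\ell^{\nu_p}\)). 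If anything, your write-up is more careful than the paper's on one point: you centre the double differences at their conditional means given \(\p{Y_\ell, Y_{\ell-1}}\) before invoking \Cref{prop:zn_ind_ml}, which is what its independence and mean-zero hypotheses actually require, whereas the paper centres at the unconditional mean even though the inner samples are only conditionally independent given the two approximations of \(Y\).
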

\begin{proof}
	The correction term \eqref{eqn:dlantfy} can be divided according to
	\[
	\begin{aligned}
		\dlantmly &= \dl{^{(1)}}  + \dlantml\p{Y_{\ell-1}}, \quad \textnormal{where}\\
		\dl{^{(1)}} &\defeq  \max\br*{\antEfmly\p{Y_{\ell}}, \tpayoff[Y_{\ell}]}  - \max\br*{\antEfmly\p{Y_{\ell-1}}, \tpayoff[Y_{\ell-1}]},\\
	\end{aligned}  
	\]
	and \(\dlantml\) is as in \eqref{eqn:dlantfy}.
	Then, by Jensen's inequality
	\[
	\E*{\abs*{\dlantmly}^p} \le 2^{p-1}\p*{\E*{\abs*{\dl{^{(1)}}}^p} + \E*{\abs*{\dlantml\p{Y_{\ell-1}}}^p} }.
	\]
	The term \(\E*{\abs*{\dlantml\p{Y_{\ell-1}}}^p}\) uses the same approximation to \(Y\) at the fine and coarse terms and can therefore be bounded using \Cref{lem:2stageant_fixedy} with \(Y_\ell\mapsto Y_{\ell-1}\) {and since \eqref{eqn:dx_y} holds under \Cref{assumpt:mlmcrates,assumpt:twostagegenY,assumpt:twostageY}}. Meanwhile, using the fact that
	\[
		\begin{aligned}
			&\abs*{\max\br*{x_0, y_0} - \max\br*{x_1, y_1}}\\
		    &\le \abs*{\max\br*{x_0, y_0} - \max\br*{x_1, y_0}} + \abs*{\max\br*{x_1, y_0} - \max\br*{x_1, y_1}}\\
			&\le \abs*{x_0 - x_1} + \abs*{y_0 - y_1},
		\end{aligned}
	\]
	 and applying Jensen's inequality, we have\
	\[
	\begin{aligned}
		&\E*{\abs*{\dl{^{(1)}}}^p}\\
		&\le 2^{p-1}\p*{\E*{\abs*{\antEfmly\p{Y_{\ell}} - \antEfmly\p{Y_{\ell-1}}}^p} + \E*{\abs*{\tpayoff[Y_\ell] - \tpayoff[Y_{\ell-1}]}^p}} \\
		&\le 2^{p-1}\E*{\abs*{\antEfmly\p{Y_{\ell}} - \antEfmly\p{Y_{\ell-1}}}^p} + 4^{p-1}a_3\ell^{\lambda_p}2^{-\bqy[p]\ell p/2}\\
	\end{aligned}
	\]
	using \Cref{assumpt:twostageY} to bound the latter expectation. Moreover, using \Cref{prop:zn_ind_ml}, it follows that
	\[
		\begin{aligned}
			&\E*{\abs*{\antEfmly\p{Y_{\ell}} - \antEfmly\p{Y_{\ell-1}}}^p}\\
		    &= \E*{\abs*{\sum_{k=0}^\ell \Mlk^{-1}\sum_{n=1}^{\Mlk}\dl{\dl[k]{X^{\p{n}}\p{Y_\ell; Y_{\ell-1}}}} }^p}\\
			&\le 2^{p-1}\E*{\abs*{\sum_{k=0}^\ell \Mlk^{-1}\sum_{n=1}^{\Mlk}\dl{\dl[k]{X^{\p{n}}\p{Y_\ell; Y_{\ell-1}}}} - \E*{\dl{\dl[k]{X\p{Y_\ell; Y_{\ell-1}}}}}}^p} \\
			&\qquad + 2^{p-1}\abs*{\sum_{k=0}^\ell \E*{\dl{\dl[k]{X\p{Y_\ell; Y_{\ell-1}}}}} }^p \\
			&\le 2^{p-1}c_q\abs*{\sum_{k=0}^\ell \Mlk^{-1}\E*{\abs*{\dl{\dl[k]{X\p{Y_\ell; Y_{\ell-1}}}}}^p}^{2/p}}^{p/2} + 2^{p-1}\abs*{\E*{X_\ell\p*{Y_\ell} - X_\ell\p*{Y_{\ell-1}}}}^p\\
			&\le 2^{p-1}a_42^{-\bqy[p]\ell p / 2} + 2^{p-1}c_q2^{-\bqy[p]\ell p / 2}
			\begin{cases}
				\ell^{p/2}  & \zeta > 1\\
				1	&	\zeta = 1
			\end{cases},
		\end{aligned}
	\]
	where we use
	\[
	\sum_{k=0}^\ell \E*{\dl{\dl[k]{X\p{Y_\ell; Y_{\ell-1}}}}} = \E{ X_{\ell}\p{Y_\ell} - X_{\ell}\p{Y_{\ell-1}}},
	\]
	and the final line follows from \eqref{eqn:double_diff_bound} and \eqref{eqn:dxk_y} in \Cref{assumpt:twostagegenY} and taking \(\Mlk = \max\br{1, 2^{\ell - \zeta k}}\).
\end{proof}

The previous results are enough to provide the following classification of the cost of estimating \eqref{eqn:nestexp2} using inexact samples of \(X\) and \(Y\). Since the logarithmic term \(\lambda_p\) in \Cref{assumpt:twostagegenY} often disappears for two nested expectations, we assume \(\lambda_p = 0\) in this result.
\begin{theorem}
	\label{cor:two_stage_complexity}
	Let \(\dlantmly\) be as in \eqref{eqn:dlantfy} and \Cref{assumpt:twostageY,assumpt:twostagegenY,assumpt:xmoms_exact,assumpt:mlmcrates} hold for some \(q > 2\), \(\bq[2],\bqy[2] \ge 1\) and \(\lambda_2 = 0\), with inner samples \(\Mlk\propto 2^{\ell-\zeta k}\).
	Then, for \(\tol > 0\) there is \(C > 0\) independent of \(\tol\) and parameters \(L, \{M_\ell\}_{\ell=0}^L\) such that  the MLMC approximation \eqref{eqn:mlmc} of \eqref{eqn:nestexp2} has root mean square error \(\tol\) with cost bounded by
	\[
	\costmlmc \le C\tol^{-2}
	\begin{cases}
	1 & \bq[2],\bqy[2]>1\\
	\abs*{\log\tol}^{3}	& \textnormal{otherwise}
	\end{cases}.
	\]
\end{theorem}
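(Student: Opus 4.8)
The plan is to treat this as a standard multilevel Monte Carlo complexity result \cite{Cliffe:2011,Giles2008MLMC}, where the only nonstandard feature is the presence of polynomial-in-\(\ell\) factors in the per-level variance and cost. Three ingredients are needed: the cost \(C_\ell\) of a single sample of \(\dlantmly\), the variance \(V_\ell = \var{\dlantmly}\), and a weak-error (bias) bound fixing the number of levels \(L\). I would take all three from the preceding results: \(C_\ell\) from \eqref{eqn:dlmly_cost}, \(V_\ell\) from \Cref{lem:2stageanty} with \(p=2\), and the truncation bias from \Cref{lem:2stageanty} with \(p=1\) combined with the telescoping identity \(\nestE[0] = \sum_{\ell\ge0}\E{\dlantmly}\).

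First I would read off the rates. By \eqref{eqn:dlmly_cost}, \(C_\ell \le C\,2^\ell\) when \(\zeta>1\) and \(C_\ell \le C\,\ell\,2^\ell\) when \(\zeta=1\), so the growth rate is \(\gamma=1\) in both cases. Taking \(p=2\) and \(\lambda_2=0\) in \Cref{lem:2stageanty} gives
\[
V_\ell \le b_4\max\br*{\ell^{\nu_2}2^{-\bqy[2]\ell},\ 2^{-\bq[2]\ell},\ \ell^{\mu_2}2^{-3q\ell/2\p{q+1}}},
\]
so the variance rate is \(\beta = \min\br{\bqy[2],\bq[2],\tfrac{3q}{2\p{q+1}}}\). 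The decisive observation is that \(q>2\) forces \(\tfrac{3q}{2\p{q+1}}>1\), so \(\beta>1\) exactly when both \(\bq[2]>1\) and \(\bqy[2]>1\), while \(\beta=1=\gamma\) as soon as either equals \(1\). For the bias I would bound the truncation tail with \(p=1\) in \Cref{lem:2stageanty}, obtaining \(\E{\abs{\dlantmly}} \le C\,\ell^{c}2^{-\alpha\ell}\) with \(\alpha = \min\br{\bqy[1]/2,\bq[1]/2,\tfrac{q}{q+1}} \ge \tfrac12 > 0\); combined with the telescoping identity this shows the bias of truncating at level \(L\) is at most \(\sum_{\ell>L}\E{\abs{\dlantmly}} \le C\,L^{c}2^{-\alpha L}\), so \(L=\Order{\abs{\log\tol}}\) renders the bias \(\le\tol\).

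Then I would run the optimisation. With the allocation \(M_\ell \propto \sqrt{V_\ell/C_\ell}\) the statistical part of the cost is of order \(\tol^{-2}\p*{\sum_{\ell=0}^L\sqrt{V_\ell C_\ell}}^2\). If \(\bq[2],\bqy[2]>1\) then \(\beta>1=\gamma\), so \(\sqrt{V_\ell C_\ell}\lesssim \ell^{c'}2^{-\p{\beta-1}\ell/2}\) is summable and the cost is \(\Order{\tol^{-2}}\); here I would choose \(\zeta>1\) to suppress the logarithmic cost factor, and the \(M_\ell\ge1\) floor contributes only \(\sum_\ell C_\ell = \Order{2^L}=\Order{\tol^{-2}}\), of the same order. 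In the complementary case \(\beta=1=\gamma\) the key point is a cancellation of logarithms: when \(\zeta>1\) a factor \(\ell\) enters \(V_\ell\) (via the \(\zeta>1\) branch of \eqref{eqn:nu_p}) but not \(C_\ell\), whereas when \(\zeta=1\) it enters \(C_\ell\) but not \(V_\ell\); in either case \(\sqrt{V_\ell C_\ell}\lesssim\sqrt\ell\), uniformly in \(1\le\zeta\le\bq\). Hence \(\sum_{\ell=0}^L\sqrt{V_\ell C_\ell}\lesssim L^{3/2}\) and the statistical cost is \(\Order{\tol^{-2}L^3}=\Order{\tol^{-2}\abs{\log\tol}^3}\), which dominates the floor contribution and yields the claimed bound.

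The hard part is not the complexity bookkeeping but establishing consistency of the estimator. Because the inner multilevel sums are truncated and because \(Y\) is approximated at the offset levels \(\ell\) and \(\ell-1\) in the fine and coarse parts of \(\dlantmly\), the identity \(\nestE[0]=\sum_\ell\E{\dlantmly}\) is not visibly exact, and I would need to verify (as anticipated by the remark that ``this additional bias requires attention'' in \Cref{sec:twostageexact}) that the antithetic doubling of the coarse inner samples compensates the change of outer level, so that the residual coming from the top inner level and from \(Y_\ell\to Y\) contributes only a geometrically small bias already captured by the \(p=1\) tail estimate. The second, more routine, difficulty is to phrase the complexity theorem so that it carries the poly-logarithmic prefactors of \(V_\ell\) and \(C_\ell\) through the boundary case \(\beta=\gamma\) and delivers the exact power \(\abs{\log\tol}^3\) uniformly over the admissible range of \(\zeta\).
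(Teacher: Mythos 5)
Your proposal is correct and takes essentially the same route as the paper's proof: both invoke the standard MLMC complexity bound \(\tol^{-2}\p*{\sum_{\ell=0}^L\sqrt{\textnormal{Cost}\p{\dlantmly}\var*{\dlantmly}}}^2\) with the per-level cost from \eqref{eqn:dlmly_cost}, the variance from \Cref{lem:2stageanty} at \(p=2\) with \(\lambda_2=0\), \(L\propto\abs{\log\tol}\) to control the bias, and the key dichotomy that \(q>2\) forces \(3q/2\p{q+1}>1\), so the sum converges when \(\bq[2],\bqy[2]>1\) while \(\textnormal{Cost}\p{\dlantmly}\var*{\dlantmly}=\Order{\ell}\) otherwise, producing the \(\abs{\log\tol}^3\) factor. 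The additional bookkeeping you supply --- the \(p=1\) tail estimate for the truncation bias, the \(\zeta\)-dependent placement of the logarithmic factor between cost and variance, and the \(M_\ell\ge1\) floor --- merely makes explicit steps that the paper's much terser proof leaves implicit (including the telescoping/consistency point, which the paper also does not verify within this proof).
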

\begin{proof}
	From standard complexity {analysis} of MLMC \cite{Giles2008MLMC,Cliffe:2011}, we have
	\[
	\costmlmc \le \tol^{-2}\p*{\sum_{\ell=0}^L \sqrt{\textnormal{Cost}\p*{\dlantmly}\var*{\dlantmly} }}^2,
	\]
	where we require \(L\propto \log\p{\tol^{-1}}\) to ensure the squared bias of the MLMC estimate is smaller than \(\tol^2/2\). When \(\bq[2]>1\) and \(q>2\) we know from \eqref{eqn:cost_dlant_ml} and \Cref{lem:2stageanty} with \(1 \le \zeta \le \bq\) that the sum is convergent as \(L\to\infty\).\\
	
	{If \(\bqy[2] = 1\) or \(\bq[2] = 1\),  from \eqref{eqn:dlmly_cost} and \Cref{lem:2stageanty} we have 
	\[
	\textnormal{Cost}\p{\dlantmly\p{Y}}\var{\dlantmly\p{Y}} = \Order{\ell},
	\]
	 adding the extra factor \(\abs{\log\tol}^{3}\) to the cost.}
\end{proof}
 		
	\section{Doubly Antithetic Estimate}	
		\label{sec:ant_path_2stage}
		Many practical systems are well represented by SDEs {which have non-commutative diffusion coefficients in the sense of \cite{Kloeden:1999}}. This feature precludes efficient Milstein simulation and effectively limits convergence rates to those of Euler-Maruyama discretisation \cite{cc80}. If both \(X\) and \(Y\) in \eqref{eqn:nestexp2} depend on an SDE with this property, resorting to Euler-Maruyama discretisation implies \(\bqy[p] = \bq[p] = 1\) in \Cref{assumpt:twostageY,assumpt:twostagegenY} under standard conditions on the coefficients of the SDE \cite{Kloeden:1999}. For these values of the parameters, \Cref{cor:two_stage_complexity} implies the MLMC estimator of \(\nestE[0]\) discussed in the previous section is subject to a significant logarithmic factor in the cost. However, in the context of multilevel Monte Carlo, improving upon Euler-Maruyama convergence rates is often possible while avoiding the simulation of L\'evy areas. The approach in \cite{giles14antmilstein} uses an antithetic coupling of two fine SDE paths with a coarse SDE path to cancel lower-order terms from the L\'evy areas within the multilevel differences. Similar approacher are applied in \cite{hs23} to stochastic partial differential equations and in \cite{brps23,ht18} for interacting particle systems. Such methods can directly be applied to create antithetic terms \(\dl[k]X\p{Y}\) which improve the rate \(\bq[p]\) in \Cref{assumpt:mlmcrates,assumpt:twostagegenY}. Of particular interest here are antithetic couplings of \(Y\) and how these interact with the antithetic estimators discussed above.\\

Motivated by the results in \cite{giles14antmilstein}, the key contribution of this section is to {consider an alternative condition to \Cref{assumpt:twostageY}, which permits an additional antithetic difference from the approximation of \(Y\).} {Similar to \Cref{assumpt:twostageY}, we define the antithetic double difference
\[
\antdd \defeq \frac{1}{2}\sum_{i=0}^1 \dl[k]{X}\p*{\Yantf{\ell}{i}} - \dl[k]{X}\p*{\Yantc{\ell -1}}.
\] }
In particular, assume that for two fine path approximations \(\Yantf{\ell}{0}\) and \(\Yantf{\ell}{1}\), with corresponding coarse approximation \(\Yantc{\ell-1}\) the following condition holds:
\begin{assumption}
	\label{assumpt:ant_Y_path}
	For some \(q > 2\) and all \(p\le q\), there is \(\bqy[p] \ge 1\) such that
	\[
	\begin{aligned}
	\textnormal{Cost}\p*{Y_\ell} &\le a_0 2^\ell,\\
	\E*{\norm*{\frac{1}{2}\sum_{i=0}^1 \Yantf{\ell}{i} - \Yantc{\ell-1}}^p} &\le a_3 2^{-\bqy[p]\ell p / 2},\\
	\E*{\abs*{\frac{1}{2}\sum_{i=0}^1 X_k\p*{\Yantf{\ell}{i}} - X_k\p*{\Yantc{\ell-1}}}^p} &\le a_4 2^{-\bqy[p] \ell p  / 2},\\
	\E*{\abs*{\antdd}^q} &\le a_52^{-\bqy[p]\ell p / 2},\\
	\E*{\norm{Y_\ell - Y}^q} + \E*{\abs*{X\p{Y} - X\p{Y_\ell}}^q} &\le a_6 2^{-q\ell / 2}.\\
	\end{aligned}
	\]
\end{assumption}
{As remarked following \Cref{assumpt:twostageY}, if \(\dl[k]X\p{y}\) is a {difference between} samples of \(X_k\p{y}\) and \(X_{k-1}\p{y}\), the condition on \(\antdd\), required in the proof of \Cref{prop:moments_antithetic_path_twostage}, is an immediate consequence of {the triangle} inequality and the third condition in \Cref{assumpt:ant_Y_path}.\\}

By carefully coupling the two fine paths, it is often possible to increase \(\bqy[p]\) over what is possible with a single fine approximation as required by \Cref{assumpt:twostageY}.
In \cite{giles14antmilstein}, an antithetic Milstein coupling of SDE paths is developed and shown to satisfy \Cref{assumpt:ant_Y_path} with \(\bqy[p] = 2\) for all \(p< \infty\), while avoiding simulation of L\'evy areas under general assumptions.\\

It is necessary to bound moments of the antithetic difference in the payoffs \(\pi\) given by
\[
\Delta^{\p{\textnormal{ant}}}_\ell\pi\defeq \frac{1}{2}\sum_{i=0}^1 \pi\p{\Yantf{\ell}{i}} - \pi\p{\Yantc{\ell-1}}.
\] 
{\Cref{lem:ant_general} provides a useful result for antithetic differences of \(\rset\)-valued variables with \(\pi:\rset\to\rset\) defined by \(\pi\p{\cdot} = \max\br{\cdot, 0}\). The result relies on the fact that \(\max\br{\p{x_0 + x_1}/2, 0} - \p{\max\br{x_0, 0} + \max\br{x_1, 0}}/2\) is zero whenever \(x_0\) and \(x_1\) lie on the same side of 0. For \(\pi:\rset^d\to\rset\) and \(y\in\rset^d\), this motivates considering functions of the form \(\pi\p{y} = \max\br{P\p{y}, 0}\), where \(P:\rset^d\to\rset\) projects \(y\) into \(\rset\). If \(P\) is an affine projection, then \(\pi\) retains the key property that \(\max\br{\p{P\p{y_0} + P\p{y_1}}/2, 0} - \p{\max\br{P\p{y_0}, 0} + \max\br{P\p{y_1}, 0}}/2 = 0\) whenever \(P\p{y_0}\) and \(P\p{y_1}\) are on the same side of 0.  In general, assume there exists \(R\ge 0\) affine projections \(\{P_r\}_{r=1}^R\) mapping \(\rset^d\to\rset\) and an everywhere twice differentiable function \(\pi_s:\rset^d\to\rset\), with bounded second derivatives, such that \(\pi\) is of the form
\begin{equation}
\label{eqn:pi_ant_mil}
\pi\p*{y} = \pi_s\p*{y}  + \sum_{r = 1}^R \max\br*{P_r\p{y}, 0}.
\end{equation}
An extension of \Cref{lem:ant_general} to antithetic differences involving functions of \(\rset^d\)-valued random variables taking the form \eqref{eqn:pi_ant_mil} is proven in \Cref{lem:ant_general_smooth}. Similar to the hypothesis of \Cref{lem:ant_general}, it is important to bound the probability that \(P_r\p{Y}\) is sampled close to zero, which motivates the following assumption.}
\begin{assumption}
	\label{assumpt:ant_sde_density}
	Let \(\pi\) be of the form \eqref{eqn:pi_ant_mil} {for some affine projections \(\br{{P_r}}_{r=1}^R\).} There is \(\hat\delta, \hat\rho > 0\) such that \(x < \hat\delta\) implies 
	\[
	\sup_{1\le r\le R} \prob{\abs{{P_r\p{Y}}} \le x} \le \hat\rho x.
	\]
\end{assumption}
{The following result bounds moments of \(\Delta^{\p{\textnormal{ant}}}_\ell\pi\) under the setup defined above. In particular, the result extends \cite[Theorem 5.2]{giles14antmilstein} to include bounds on arbitrary moments.}
\begin{lemma}
	\label{lem:ant_mil_hinge}
	Let  \(\pi\) be defined through \eqref{eqn:pi_ant_mil} and consider the antithetic approximations given by \(\Yantf{\ell}{0}, \Yantf{\ell}{1}\) and \(\Yantc{\ell - 1}\) satisfying \Cref{assumpt:ant_Y_path} for some \(q > 2\) and \(\bqy[p]\) defined for \(p\le q\). Under \Cref{assumpt:ant_sde_density}, it follows that
	\[
	\E*{\abs*{\frac{1}{2}\p*{\pi\p{\Yantf{\ell}{0}} + \pi\p{\Yantf{\ell}{1}}} - \pi\p{\Yantc{\ell-1}} }^p} \le b_5 2^{-\min\br{\bqy[p], q\p{1 + p^{-1}}/\p{q+1}} p \ell / 2},
	\] 
	for all \(2\le p \le q\), where \(b_5\) is independent of \(\ell\).
\end{lemma}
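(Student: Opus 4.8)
The plan is to exploit the additive structure \eqref{eqn:pi_ant_mil} and bound the smooth part $\pi_s$ and each hinge $\max\br*{P_r\p{\cdot},0}$ separately in $L^p$, since the two rates competing in the claimed bound come from genuinely different sources: the rate $\bqy[p]$ is the antithetic ``average-versus-coarse'' rate granted directly by \Cref{assumpt:ant_Y_path}, whereas $q\p{1+p^{-1}}/\p{q+1}$ is the kink rate produced by \Cref{lem:ant_general}. Writing $\bar Y_\ell\defeq\tfrac12\p*{\Yantf{\ell}{0}+\Yantf{\ell}{1}}$ and using linearity of the antithetic difference together with Minkowski's inequality, it is enough to bound each of the $R+1$ contributions in $L^p$; the constant $b_5$ is then the largest of the resulting constants.

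For a fixed hinge, I set $Z_r\defeq P_r\p{Y}$, $Z_r^{f,i}\defeq P_r\p*{\Yantf{\ell}{i}}$ and $Z_r^{c}\defeq P_r\p*{\Yantc{\ell-1}}$, all real valued. Because $P_r$ is affine it commutes with averaging, so $\tfrac12\sum_i Z_r^{f,i}=P_r\p{\bar Y_\ell}$, and inserting this midpoint splits the hinge contribution as $A_r+B_r$ with
\[
A_r \defeq \tfrac12\sum_{i=0}^{1}\max\br*{Z_r^{f,i},0} - \max\br*{\tfrac12\sum_{i=0}^{1} Z_r^{f,i},0}, \qquad B_r \defeq \max\br*{\tfrac12\sum_{i=0}^{1} Z_r^{f,i},0} - \max\br*{Z_r^{c},0}.
\]
The term $A_r$ equals, up to sign, the antithetic quantity in \Cref{lem:ant_general} for the real-valued approximation $Z_r^{f,i}$ of $Z_r$: \Cref{assumpt:ant_sde_density} supplies the linear-growth bound $\prob{\abs{Z_r}<x}\le\hat\rho x$ required there, while the last line of \Cref{assumpt:ant_Y_path} gives $\E*{\abs*{Z_r^{f,i}-Z_r}^q}\le L_r^q a_6 2^{-q\ell/2}$, with $L_r$ the Lipschitz constant of $P_r$. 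Hence \Cref{lem:ant_general} yields $\E*{\abs*{A_r}^p}\le b_0\p*{L_r^q a_6 2^{-q\ell/2}}^{\p{p+1}/\p{q+1}}$, i.e. the rate $q\p{1+p^{-1}}/\p{q+1}$. For $B_r$ I use that $\max\br*{\cdot,0}$ is $1$-Lipschitz and $P_r$ affine, so $\abs*{B_r}\le L_r\norm*{\bar Y_\ell-\Yantc{\ell-1}}$, and the second line of \Cref{assumpt:ant_Y_path} controls this in $L^p$ at rate $\bqy[p]$. Thus the hinge contribution decays at rate $\min\br*{\bqy[p],\,q\p{1+p^{-1}}/\p{q+1}}$.

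For the smooth part, a second-order Taylor expansion of $\pi_s\p*{\Yantf{\ell}{i}}$ about $\bar Y_\ell$ cancels the first-order terms under the average, since $\sum_i\p*{\Yantf{\ell}{i}-\bar Y_\ell}=0$, leaving
\[
\abs*{\tfrac12\sum_{i=0}^{1}\pi_s\p*{\Yantf{\ell}{i}}-\pi_s\p{\bar Y_\ell}}\le\tfrac18\norm*{\nabla^2\pi_s}_\infty\norm*{\Yantf{\ell}{0}-\Yantf{\ell}{1}}^2.
\]
Since each fine path obeys $\E*{\norm*{\Yantf{\ell}{i}-Y}^q}\le a_6 2^{-q\ell/2}$, the squared increment $\norm*{\Yantf{\ell}{0}-\Yantf{\ell}{1}}^2$ has rate $2$ in $L^{q/2}$ and rate $1$ in $L^{q}$; Lyapunov interpolation of these two then gives rate $q/p$ in $L^p$ for $q/2<p\le q$ and rate $2$ for $p\le q/2$, and in either case $\min\br*{q/p,2}\ge q\p{1+p^{-1}}/\p{q+1}$ for all $p\le q$. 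The remaining term $\pi_s\p{\bar Y_\ell}-\pi_s\p*{\Yantc{\ell-1}}$ is controlled at rate $\bqy[p]$ exactly as $B_r$. Hence the smooth contribution decays at least as fast as $\min\br*{\bqy[p],\,q\p{1+p^{-1}}/\p{q+1}}$ and never dominates.

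The main obstacle is the hinge analysis, and in particular justifying \Cref{lem:ant_general} for the \emph{coupled} antithetic pair $\p*{\Yantf{\ell}{0},\Yantf{\ell}{1}}$: that lemma is stated for two samples of the approximation, but its proof bounds only the probability that the two samples straddle $0$ and so depends on the marginal error moments alone, so the deliberate correlation between the two fine paths is harmless provided both share the common limit $Z_r=P_r\p{Y}$ and \Cref{assumpt:ant_sde_density} transfers to $Z_r$. A secondary but necessary point is the moment bookkeeping for the smooth remainder when $p>q/2$, where only $L^q$ control of the path increments is available and the Lyapunov interpolation above is what recovers the required rate. Collecting the three decay rates and taking $b_5$ as the maximum of the constants completes the proof.
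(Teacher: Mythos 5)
Your route is, in substance, the paper's own proof with the appendix result unpacked. The paper also inserts the midpoint \(\overline{Y_\ell} \defeq \p{\Yantf{\ell}{0} + \Yantf{\ell}{1}}/2\), bounds \(\pi\p{\overline{Y_\ell}} - \pi\p{\Yantc{\ell-1}}\) via the Lipschitz property and the second inequality of \Cref{assumpt:ant_Y_path} (your \(B_r\) terms, rate \(\bqy[p]\)), and bounds the kink term \(\frac{1}{2}\sum_i\pi\p{\Yantf{\ell}{i}} - \pi\p{\overline{Y_\ell}}\) by citing \Cref{lem:ant_general_smooth}; the internal proof of that lemma is precisely your per-hinge straddle argument plus a Taylor argument for \(\pi_s\). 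Your two structural observations are correct and are exactly what the paper relies on: affinity of \(P_r\) reduces each hinge to the scalar setting of \Cref{lem:ant_general} with \(Z_r = P_r\p{Y}\), \(\E*{\abs{P_r\p{\Yantf{\ell}{i}} - P_r\p{Y}}^q}\le L_r^q a_6 2^{-q\ell/2}\); and the kink lemma tolerates the \emph{correlated} antithetic pair because its proof uses only the marginal \(L^q\) error of each sample against the common limit \(Z\) together with the small-ball bound of \Cref{assumpt:ant_sde_density}, never independence.

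There is, however, one step that fails as written: the smooth-part moment bookkeeping for \(q/2 < p \le q\). Writing \(D \defeq \Yantf{\ell}{0} - \Yantf{\ell}{1}\), you claim \(\norm{D}^2\) has ``rate 1 in \(L^q\)'', i.e.\ \(\E*{\norm{D}^{2q}}^{1/q} = \Order{2^{-\ell/2}}\). This requires a bound on the \(2q\)-th moment of the path error, whereas \Cref{assumpt:ant_Y_path} controls moments only up to order \(q\); consequently the Lyapunov interpolation between \(L^{q/2}\) and \(L^q\) cannot be run on the squared increment. The repair is standard and is what the proof of \Cref{lem:ant_general_smooth} does: interpolate pointwise on the smooth antithetic difference \(\Delta_{s,\ell} \defeq \frac{1}{2}\sum_{i}\pi_s\p{\Yantf{\ell}{i}} - \pi_s\p{\overline{Y_\ell}}\) between your second-order bound \(\abs{\Delta_{s,\ell}}\le \frac{1}{8}L^{\prime\prime}\norm{D}^2\) and the first-order bound \(\abs{\Delta_{s,\ell}}\le \frac{1}{2}L_s\norm{D}\) --- the latter is available because you already use Lipschitz continuity of \(\pi_s\) to treat \(\pi_s\p{\overline{Y_\ell}} - \pi_s\p{\Yantc{\ell-1}}\) (the same implicit strengthening of \eqref{eqn:pi_ant_mil} that the paper makes when it declares \(\pi\) Lipschitz). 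Concretely, with \(v = \p{2p-q}/p \in \p{0, 1}\),
\[
\abs*{\Delta_{s,\ell}}^p = \abs*{\Delta_{s,\ell}}^{vp}\abs*{\Delta_{s,\ell}}^{\p{1-v}p} \le \p*{\tfrac{1}{2}L_s}^{vp}\p*{\tfrac{1}{8}L^{\prime\prime}}^{\p{1-v}p}\norm*{D}^{q},
\]
since the exponent of \(\norm{D}\) is \(vp + 2\p{1-v}p = q\); hence \(\E*{\abs{\Delta_{s,\ell}}^p} \le C\E*{\norm{D}^q} \le C^\prime 2^{-q\ell/2}\), which is your rate \(q/p \ge q\p{1+p^{-1}}/\p{q+1}\). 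With that substitution your proof is complete and equivalent to the paper's.
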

\begin{proof}
	Let \(\overline{Y_\ell} \defeq \p{\Yantf{\ell}{0} + \Yantf{\ell}{1}}/2\) so that by Jensen's inequality 
	\[
	\begin{aligned}
	&\E*{\abs*{\frac{1}{2}\p*{\pi\p{\Yantf{\ell}{0}} + \pi\p{\Yantf{\ell}{1}}} - \pi\p{\Yantc{\ell-1}} }^p}\\
	&\le 
	2^{p-1}\p*{
		\E*{\abs*{\frac{1}{2}\p*{\pi\p{\Yantf{\ell}{0}} + \pi\p{\Yantf{\ell}{1}}} - \pi\p{\overline{Y_\ell}} }^p} + \E*{\abs*{\pi\p*{\overline{Y_\ell}} - \pi\p{\Yantc{\ell-1}}}^p} }.
	\end{aligned}	
	\]
	By \Cref{assumpt:ant_Y_path}, since \(\pi\) is Lipschitz it follows that 
	\[
	\E*{\abs*{\pi\p*{\overline{Y_\ell}} - \pi\p{\Yantc{\ell-1}}}^p} \le \hat b_5 2^{-\bqy[p]\ell p / 2}.
	\]
	On the other hand, from \Cref{lem:ant_general_smooth} under \Cref{assumpt:ant_Y_path,assumpt:ant_sde_density} it is shown that 
	\[
	\E*{\abs*{\frac{1}{2}\p*{\pi\p{\Yantf{\ell}{0}} + \pi\p{\Yantf{\ell}{1}}} - \pi\p{\overline{Y_\ell}} }^p} \le \tilde b_5 \E*{\norm{Y_\ell - Y}^q}^{\p{p+1}/\p{q+1}},
	\]
	which completes the result from the third inequality in \Cref{assumpt:ant_Y_path}.
\end{proof}

For the two stage problem \eqref{eqn:nestexp2}, the correction term \eqref{eqn:dlantfy} can be extended to account for antithetic path differences:
\[
\begin{aligned}
\dlantpath &\defeq \frac{1}{2}\sum_{i=0}^1\bigg(\max\br*{\antEfmly\p{\Yantf{\ell}{i}}, \tpayoff[\Yantf{\ell}{i}]}- \max\br*{\antEcmly[i]\p*{\Yantc{\ell-1}}, \tpayoff[\Yantc{\ell-1}]}\bigg).
\end{aligned}
\]
{The difference \(\dlantpath\) consists of two fine and two coarse components. The terms at approximation level \(\ell\) contain an antithetic coupling of fine approximations to \(Y\) satisfying \Cref{assumpt:ant_Y_path}. Meanwhile, the coarse components contain an antithetic coupling of inner MLMC approximations to \(\nestE[1]\) given a single coarse estimate of \(Y\), as studied in \Cref{sec:ant_approx}. The result of \Cref{lem:2stageanty} then follows for the doubly antithetic difference \(\dlantpath\):}
\begin{proposition}
	\label{prop:moments_antithetic_path_twostage}
	Under \Cref{assumpt:ant_Y_path,assumpt:twostagegenY,assumpt:ant_sde_density,assumpt:xmoms_exact}, defined for some \(q > 2\) and \(\bq[p], \bqy[p] \ge 1\), for \(\pi\) defined through \eqref{eqn:pi_ant_mil} it follows that there is \(b_6 > 0\) independent of \(\ell\) such that
	\[
	\E*{\abs*{\dlantpath}^p} \le b_6\max\br*{\ell^{\nu_p}2^{-\bqy[p]\ell p}, \ell^{\mu_p}2^{-\min\br{\bq[p], q\p{1+p^{-1}}/\p{q+1}}p\ell/2}},
	\]
	for all \(p\le q\), where \(\mu_p\) and \(\nu_p\) are defined in \eqref{eqn:mu_p} and \eqref{eqn:nu_p}.
\end{proposition}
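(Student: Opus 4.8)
The plan is to mirror the decomposition used in the proof of \Cref{lem:2stageanty}, separating the part of \(\dlantpath\) that uses a single (coarse) approximation of \(Y\) from the genuinely path-antithetic part. Since the coarse estimators \(\antEcmly[i]\p{\Yantc{\ell-1}}\) in \eqref{eqn:dlantfy} coincide with those appearing in \(\dlantml\) at \(Y=\Yantc{\ell-1}\), I would write
\[
\dlantpath = \Delta^{(1)} + \dlantml\p{\Yantc{\ell-1}}, \qquad \Delta^{(1)} \defeq \frac12\sum_{i=0}^1 \max\br*{\antEfmly\p{\Yantf{\ell}{i}}, \tpayoff[\Yantf{\ell}{i}]} - \max\br*{\antEfmly\p{\Yantc{\ell-1}}, \tpayoff[\Yantc{\ell-1}]},
\]
where the subtracted fine term, evaluated at the coarse path with the same inner samples, cancels the fine term of \(\dlantml\p{\Yantc{\ell-1}}\). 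The term \(\dlantml\p{\Yantc{\ell-1}}\) uses one approximation of \(Y\) throughout and is bounded directly by \Cref{lem:2stageant_fixedy} with \(Y_\ell\mapsto\Yantc{\ell-1}\), whose hypothesis \(\E{\abs{X\p{Y}-X\p{\Yantc{\ell-1}}}^q}=\Order{2^{-q\ell/2}}\) is the last bound of \Cref{assumpt:ant_Y_path} at level \(\ell-1\). This supplies the \(\ell^{\lambda_p}2^{-\bq[p]\ell p/2}\) and \(\ell^{\mu_p}2^{-q\p{1+p^{-1}}p\ell/2\p{q+1}}\) contributions, which combine into the second term of the claimed bound via \(\min\br{\bq[p],q\p{1+p^{-1}}/\p{q+1}}\). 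It then remains to treat \(\Delta^{(1)}\), which carries the antithetic coupling of \(Y\).

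For \(\Delta^{(1)}\) I would use the identity \(\max\br{u,\tpayoff[y]}=\tpayoff[y]+\max\br{u-\tpayoff[y],0}\) to separate the affine/smooth payoff from the inner estimate. This splits \(\Delta^{(1)}\) into the antithetic payoff difference \(\tfrac12\sum_i\tpayoff[\Yantf{\ell}{i}]-\tpayoff[\Yantc{\ell-1}]\), bounded by \Cref{lem:ant_mil_hinge} to give \(2^{-\min\br{\bqy[p],q\p{1+p^{-1}}/\p{q+1}}p\ell/2}\), plus a hinge term in the \(\rset\)-valued quantities \(V_i\defeq\antEfmly\p{\Yantf{\ell}{i}}-\tpayoff[\Yantf{\ell}{i}]\) and \(V_c\defeq\antEfmly\p{\Yantc{\ell-1}}-\tpayoff[\Yantc{\ell-1}]\). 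Writing \(\overline V\defeq\tfrac12\p{V_0+V_1}\), I would decompose this hinge term as
\[
\frac12\sum_{i=0}^1\max\br*{V_i,0}-\max\br*{V_c,0} = \p*{\frac12\sum_{i=0}^1\max\br*{V_i,0}-\max\br*{\overline V,0}} + \p*{\max\br*{\overline V,0}-\max\br*{V_c,0}}.
\]

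The first bracket is an antithetic hinge difference in \(V_0,V_1\), both of which converge in \(L^q\) to \(W\defeq\nestE[1]\p{Y}-\tpayoff[Y]\); combining \Cref{lem:2_stage_moments} for the inner error with the last bound of \Cref{assumpt:ant_Y_path} for the error in \(Y\) gives \(\E{\abs{V_i-W}^q}=\Order{2^{-q\ell/2}}\), and \Cref{assumpt:xmoms_exact} supplies the linear-growth density of \(W\) near \(0\), so \Cref{lem:ant_general_smooth} yields \(\Order{\ell^{\mu_p}2^{-q\p{1+p^{-1}}p\ell/2\p{q+1}}}\). For the second bracket I would apply \(\abs{\max\br{\overline V,0}-\max\br{V_c,0}}\le\abs{\overline V-V_c}\); the payoff parts of \(\overline V-V_c\) reproduce the antithetic payoff difference (again \Cref{lem:ant_mil_hinge}), while the inner-estimator parts equal \(\tfrac12\sum_i\antEfmly\p{\Yantf{\ell}{i}}-\antEfmly\p{\Yantc{\ell-1}}\), a multilevel average of i.i.d. copies of the antithetic double difference \(\antdd\). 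Exactly as for \(\antEfmly\p{Y_\ell}-\antEfmly\p{Y_{\ell-1}}\) in \Cref{lem:2stageanty}, combining \Cref{prop:zn_ind_ml} with the bounds on \(\E{\abs{\antdd}^p}\) and on the mean \(\E{X_\ell\p{\Yantf{\ell}{i}}-X_\ell\p{\Yantc{\ell-1}}}\) from \Cref{assumpt:ant_Y_path} produces \(\Order{\ell^{\nu_p}2^{-\bqy[p]\ell p/2}}\), with the \(\ell^{\nu_p}\) factor arising from \(\Mlk\propto2^{\ell-\zeta k}\). Collecting every contribution through the triangle inequality then gives the stated maximum.

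The main obstacle is this hinge term: unlike the fixed-\(Y\) setting of \Cref{lem:2stageant_fixedy}, the two fine evaluations \(V_0,V_1\) are correlated, both through the antithetic coupling of \(\Yantf{\ell}{0},\Yantf{\ell}{1}\) and through the shared inner estimator, so one cannot invoke the i.i.d. form of \Cref{lem:ant_general} and must instead use the more general \Cref{lem:ant_general_smooth}. The supporting care lies in verifying its hypotheses—the linear-growth density of \(W\) at \(0\) and the \(L^q\) rate \(V_i\to W\) along the antithetic fine paths—and in checking that the Lipschitz remainder collapses cleanly onto \(\antdd\), so that the improved rate \(\bqy[p]\) of \Cref{assumpt:ant_Y_path}, rather than the single-path rate, governs the path-antithetic contribution.
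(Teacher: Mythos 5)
Your proposal is correct and follows essentially the same route as the paper: the same top-level split \(\dlantpath = \dl{^{(1)}} + \dlantml\p{\Yantc{\ell-1}}\) with \Cref{lem:2stageant_fixedy} applied to the coarse-path term, and the same three ingredients for \(\dl{^{(1)}}\) --- the antithetic hinge difference handled by \Cref{lem:ant_general}/\Cref{lem:ant_general_smooth} (which, as the paper also exploits, never required independent samples), the antithetic payoff difference handled by \Cref{lem:ant_mil_hinge}, and the inner-estimator double difference \(\antdd\) handled by \Cref{prop:zn_ind_ml} together with \Cref{assumpt:ant_Y_path}. The only (immaterial) difference is bookkeeping: you peel off the payoff via \(\max\br{u,\pi} = \pi + \max\br{u-\pi,0}\) before splitting the hinge term, whereas the paper first compares against \(\max\br{\antEfmlun,\bar\pi}\) and then applies the Lipschitz bound; both orderings yield the same contributions bounded by the same lemmas.
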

\begin{proof}
	Define 
	\[
	\begin{aligned}
	\dl{^{(1)}} &\defeq \frac{1}{2}\sum_{i=0}^1\max\br*{\antEfmly\p{\Yantf{\ell}{i}}, \tpayoff[\Yantf{\ell}{i}]} - \max\br*{\antEfmly\p*{\Yantc{\ell-1}}, \tpayoff[\Yantc{\ell-1}]},
	\end{aligned}
	\]
	and let \(\dlantml\) be as in \eqref{eqn:dlantfy}
	so that 
	\[
	\E*{\abs*{\dlantpath}^p} \le 2^{p-1}\p*{\E*{\abs*{\dl{^{(1)}}}^p} +  \E*{\abs*{\dlantml\p{\Yantc{\ell-1}}}^p}}.
	\]
	{The latter term contains only a single approximation of \(Y\) and can be bounded using \Cref{lem:2stageant_fixedy}.
	On the other hand, defining}
	\[
	\begin{aligned}
	\antEfmlun &\defeq \frac{1}{2}\p*{\antEfmly\p{\Yantf{\ell}{0}} + \antEfmly\p{\Yantf{\ell}{1}}}\\
	\overline{\pi} &\defeq \frac{1}{2}\p*{ \pi\p*{\Yantf{\ell}{0}} + \pi\p*{\Yantf{\ell}{1}} },
	\end{aligned}
	\]
	we have 
	\[
	\begin{aligned}
	&\E*{\abs*{\dl{^{(1)}}}^p}\\
	&\le 2^{p-1}\E*{\abs*{\frac{1}{2}\sum_{i=0}^1\max\br*{\antEfmly\p{\Yantf{\ell}{i}}, \tpayoff[\Yantf{\ell}{i}]} - \max\br*{\antEfmlun, \bar\pi}  }^p}\\
	&\quad + 2^{p-1}\E*{\abs*{\max\br*{\antEfmlun, \bar\pi} - \max\br*{\antEfml\p*{\Yantc{\ell-1}}, \pi\p*{\Yantc{\ell-1}}} }^p}.
	\end{aligned}
	\]
	For the former term,
	\[
	\begin{aligned}
	&\E*{\abs*{\frac{1}{2}\sum_{i=0}^1\max\br*{\antEfmly\p{\Yantf{\ell}{i}}, \tpayoff[\Yantf{\ell}{i}]} - \max\br*{\antEfmlun, \bar\pi}  }^p}\\
	&= \E*{\abs*{\frac{1}{2}\sum_{i=0}^1\max\br*{\antEfmly\p{\Yantf{\ell}{i}}- \tpayoff[\Yantf{\ell}{i}], 0} - \max\br*{\antEfmlun- \bar\pi, 0}  }^p}\\
	&\le \hat b_6 \ell^{\mu_p} 2^{q\p{1+p^{-1}}\ell p / 2\p{q+1}},
	\end{aligned}
	\]
	using \Cref{lem:ant_general} under \Cref{assumpt:ant_Y_path,assumpt:twostagegenY,assumpt:xmoms_exact} in the final line.
	For the latter term,
	\[
	\begin{aligned}
	&\E*{\abs*{\max\br*{\antEfmlun, \bar\pi} - \max\br*{\antEfml\p*{\Yantc{\ell-1}}, \pi\p*{\Yantc{\ell-1}}} }^p}\\
	&\le \E*{\abs*{\antEfmlun - \antEfml\p*{\Yantc{\ell-1}} }^p}\\
	&\quad + \E*{\abs*{\bar\pi -  \pi\p*{\Yantc{\ell-1}} }^p},\\
	\end{aligned}
	\]
	where \(\E{\abs{\bar\pi\p*{Y_\ell} -  \pi\p*{\Yantc{\ell-1}} }^p} \le b_52^{-q\p{1+p^{-1}}\ell p /2\p{q+1}}\) by \Cref{lem:ant_mil_hinge} and, for
	\[
	\Delta_\ell^{\p{\textnormal{ant}}}\Delta_k X\p*{Y_\ell}  \defeq \frac{1}{2}\sum_{i=0}^1 \Delta_k X\p*{\Yantf{\ell}{i}} - \Delta_k X\p*{\Yantc{\ell-1}},
	\]
	we have (following similar steps to \Cref{lem:2stageanty})
	\[
	\begin{aligned}
	&\E*{\abs*{\antEfmlun\p*{Y_\ell} - \antEfml\p*{\Yantc{\ell-1}} }^p}\\
	&\le \E*{\abs*{\sum_{k=0}^\ell \Mlk^{-1}\sum_{n=1}^{\Mlk}\Delta_\ell^{\p{\textnormal{ant}}}\Delta_k X^{\p{n}}\p*{Y_\ell}   }^p}\\
	&\le c_q\abs*{\sum_{k=0}^\ell \Mlk^{-1} \E*{\abs*{\Delta_\ell^{\p{\textnormal{ant}}}\Delta_k X\p*{Y_\ell}}^p }^{2/p}}^{p/2}  + 2^{p-1}\E*{\abs*{\frac{1}{2}\sum_{i=0}^1X_\ell\p{\Yantf{\ell}{i}} - X_\ell\p{\Yantc{\ell-1}}}^p}\\
	&\le c_q 2^{-\bqy[p] \ell p / 2}
	\begin{cases}
	1 & \zeta = 1\\
	\ell^{p/2} & \zeta > 1
	\end{cases}.
	\end{aligned}	
	\]
\end{proof}
From the above result, \Cref{cor:two_stage_complexity} can be extended naturally to the doubly antithetic difference \(\dlantpath\), where \Cref{assumpt:ant_Y_path,assumpt:ant_sde_density} are required in place of \Cref{assumpt:twostageY}. 			
	\section{Numerical Experiment for a Two-Stage Bermudan Option} 
	\label{sec:numerics}
\subsection{Problem Setup}
Let \(W_t = \p{W_t^i}_{i=0}^d\) be a \(d+1\)-dimensional Brownian motion defined on a probability space \(\p{\Omega, \mathcal F, \mathsf P}\), where \(W_t^i\) and \(W_t^j\) are independent for \(i\neq j\). Consider a market consisting of \(d+1\) assets  \(S_t = \p{S_t^i}_{i=0}^d\) given by  
\begin{equation}
\label{eqn:sde}
\begin{aligned}
\text{d}S_t^0 &= {\sigma^0 S_t^0}\text{d}W_t^0,\\
\text{d}S_t^i &= rS_t^i\text{d}t + {\p*{\sigma^{i}S_t^i  +  S_t^0} \text{d}W_t^i}.
\end{aligned}
\end{equation}
In this context, \(r\) represents a risk-free interest rate, \(\sigma^{i}\) is the volatility of asset \(i\) for \(i\ge 1\), and \(S_t^0\) represents a systemic stochastic volatility component.
Importantly, since the diffusion of asset \(i = 1,\dots, d\) contains a mixture of \(S_t^i\) and \(S_t^0\), the commutativity condition \cite{Kloeden:1999,giles14antmilstein,cc80} which allows Milstein simulation without evaluating the L\'evy areas is not satisfied provided \(S_0^0 \neq 0\). {However, since the coefficients of the SDE are linear, the antithetic Milstein approach in \cite{giles14antmilstein} can be used to improve convergence of multilevel differences, as discussed in \Cref{sec:ant_path_2stage}.}\\

For a fixed strike price \(K > 0\) and maturity \(T > 0\), consider a financial option with discounted payoff
\[
\pi_T\p{S_T} = e^{-rT}\max\br*{0, K - \frac{1}{d}\sum_{i=1}^d S_T^i},  
\]
at maturity and which allows an early execution at time \(T/2\), with payoff \(\pi_{T/2}\p{S_{T/2}}\). {Note that this payoff can be expressed in the form \eqref{eqn:pi_ant_mil} with a single linear projection \(P_1\p{y} = K - d^{-1}\sum_{i=1}^d y^i\).} The value of this option can be expressed in the form \eqref{eqn:nestexp2} for \(X = \pi_T\p{S_T}\), \(Y = S_{T/2}\) and \(\pi\p{Y} = \pi_{T/2}\p{Y}\). Explicitly, the Milstein scheme for this method with step size \(h_\ell\) uses the updates
\[
\begin{aligned}
\hat S_{n+1}^i &= F^i\p{\hat S_n; h, \Delta W_n, A_n}, \quad \text{where}\\
F^0\p{\hat S_n; h, \Delta W_n, A_n} &= \hat S_n^0  + \sigma^0 \hat S_n^0 \Delta W_n^0 + \frac{\p{\sigma^0}^2}{2}\hat S_n^0 \p{\Delta W_n^2 - h}\\
F^i\p{\hat S_n; h, \Delta W_n, A_n} &\defeq \hat S_n^i + r\hat S_n^ih + \p*{\sigma^i \hat S_n^i + {\hat S_n^0}}\Delta W_n^i\\ 
&\quad+ {\frac{\sigma^i}{2}\p*{\sigma^i\hat S_n^i + \hat S_n^0}\p*{\p*{\Delta W_n^i}^2 - h}}\\
&\quad + {\frac{\sigma^0}{2}\hat S_n^0\p*{\Delta W_n^i\Delta W_n^0 - A^{i, 0}_n}},
\end{aligned}
\]
where \(\Delta W_n = W_{\p{n+1}h} - W_{nh}\), and \(A = \p{A_n^{i, j}}_{ij}\) are the L\'evy areas defined in \cite[Chapter 10]{Kloeden:1999}. For the context of MLMC estimation, in \cite{giles14antmilstein}, the authors drop the L\'evy areas and couple two fine path approximations \(\hat S_{\ell, n+1}^f, \hat S_{\ell, n+1}^a\) at step size \(h_\ell \propto 2^{-\ell}\) following
\begin{equation}
\label{eqn:ant_mil_fine}
\begin{aligned}
\hat S_{\ell, n+1}^{f, i} &= F^i\p*{\hat S_{\ell, n}^{f, i} ; h_\ell, \Delta W_{n}, 0}\\
\hat S_{\ell, n+1}^{a, i} &= F^i\p*{\hat S_{\ell, n}^{a, i} ; h_\ell, \Delta W_{n + (-1)^n}, 0},
\end{aligned}
\end{equation}
and a single coarse path \(\hat S_{\ell-1, n+1}^{c}\) at level \(\ell - 1\) with step-size \(h_{\ell-1}\) given by 
\begin{equation}
\label{eqn:ant_mil_coarse}
\hat S_{\ell-1, n+2}^{c, i} = F^i\p*{\hat S_{\ell-1, n}^{c, i} ; h_{\ell-1}, \Delta W_{n} + \Delta W_{n+1}, 0}.
\end{equation}
It then follows that \cite[Theorem 4.10]{giles14antmilstein}
\begin{equation}
\label{eqn:Sbound_numerics}
\E*{\max_{0\le n\le T/h_{\ell-1}}\norm*{\frac{1}{2}\p*{\hat S_{\ell, 2n}^f + \hat S_{\ell, 2n}^a} - \hat S_{\ell-1, 2n}^c}^p} \le a_6 2^{-\ell p},
\end{equation}
for all \(p<\infty\), where \(a_6>0\) is independent of \(\ell\). In particular, since \(\pi_T\p{\cdot}\) is piecewise linear, provided
\[
{\max_{t \in \br{T/2, T}}}\prob*{\abs*{K - \frac{1}{d}\sum_{i=1}^d S_t^i} \le x} \le \bar \rho x
\]
for sufficiently small \(x>0\), \Cref{lem:ant_mil_hinge} gives{
\begin{equation}
\label{eqn:pibound_numerics}
\begin{aligned}
&\max_{t \in \br{T/2,T}} \E*{\abs*{\frac{1}{2}\p*{\pi_t\p*{\hat S_{\ell, t/h_\ell}^f} + \pi_t\p*{\hat S_{\ell, t/h_\ell}^a}} - \pi_T\p*{\hat S_{\ell-1, t/h_\ell}^c}}^p}\\
	&\le a_7 2^{-q\p{1+p^{-1}}p\ell/2\p{q+1}}
\end{aligned}
\end{equation}}
for all \(p \le q <  \infty\), where \(a_7\) is independent of \(\ell\). {In particular, define \(S_{t_1,\ell}^{f}\p{t_0, y}\), \(S_{t_1,\ell}^{a}\p{t_0, y}\) and \(S_{t_1,\ell-1}^{c}\p{t_0, y}\) to be the antithetic Milstein paths constructed through \eqref{eqn:ant_mil_fine} and \eqref{eqn:ant_mil_coarse} at levels \(\ell\) and \(\ell-1\) from time \(t_0\) to \(t_1\) given the initial condition \(S_{t_0} = y\). Given \(Y = S_{T/2}\), define antithetic multilevel differences for \(X = \pi\p{S_T}\) through
\[
\begin{aligned}
&\dl[k]X\p{Y}\defeq 
 \frac{1}{2}\p*{\pi\p*{S_{T,k}^f\p{T/2, Y}} + \pi\p*{S_{T, k}^a\p*{T/2, Y}}} - \pi\p*{S_{T,k-1}^c\p*{T/2, Y}},
\end{aligned}
\]
for \(k > 0\). By \eqref{eqn:pibound_numerics}, \Cref{assumpt:twostagegenY} holds for the differences \(\dl[k]X\) defined above with \(\bq[p] = q\p{1+p^{-1}}/\p{q+1}\) and \(\lambda_p \equiv 0\). Antithetic approximations to \(Y\) as in \Cref{sec:ant_path_2stage} are used, setting \(Y_{\ell}^{f, 0} \defeq S_{T/2, \ell}^f\p{0, S_0}\), \(Y_{\ell}^{f, 1} \defeq S_{T/2, \ell}^a\p{0, S_0}\) and \(Y_{\ell-1}^{c} \defeq S_{T/2, \ell-1}^c\p{0, S_0}\). By \eqref{eqn:Sbound_numerics} and \eqref{eqn:pibound_numerics}}, \Cref{assumpt:ant_Y_path} holds with \(\bqy[p] = q\p{1+p^{-1}}/\p{q+1}\) for all \(p\le q<\infty\). A multilevel estimate of  \eqref{eqn:nestexp2} can then be constructed of the form \eqref{eqn:mlmc} using the doubly antithetic correction terms \(\dl = \dlantpath\) analysed in \Cref{sec:ant_path_2stage}. Since \(\bq = 1\), the number of inner samples is chosen to be \(\Mlk = \Mlk[0][0]2^{\ell - k}\) (\(\zeta = 1\) in \eqref{eqn:appE}). This choice gives 
\[
\text{Cost}\p*{\dlantpath} \le \hat b_0 \p{\ell+1} 2^\ell
\]
and by \Cref{prop:moments_antithetic_path_twostage} (taking \(q\to\infty\))
\[
\var*{\dlantpath} \le \hat b_1 2^{-3\ell / 2 + \delta\ell},
\]
for all \(\delta > 0\).
The proof of \Cref{cor:two_stage_complexity} then implies the cost of attaining root mean square error \(\tol\) is
\begin{equation}
\label{eqn:mlmccost_num}
\text{Cost}_\text{MLMC}\p{\tol} \propto \tol^{-2}\p[\bigg]{\sum_{\ell = 0}^{\ceil{\abs{\log\tol}}}
	\sqrt{\ell+1}\ 2^{-\ell/4}}^2.
\end{equation}
Since the sum is bounded as \(\tol\to0\), the  asymptotic cost is of order \(\tol^{-2}\). However, for sufficiently large error tolerances, \(\p{\sum_{\ell = 0}^{\ceil{\abs{\log\tol}}} \sqrt{\ell+1}\ 2^{-\ell/4}}^2\) is roughly proportional to \(\abs{\log\tol}\), hence one can expect a logarithmic factor in the cost during a pre-asymptotic phase. \\

We further consider Euler-Maruyama simulation of \eqref{eqn:sde}, which does not require the evaluation of L\'evy areas or an antithetic path difference. Instead, we can construct an MLMC estimate for \eqref{eqn:nestexp2} using the correction terms \(\dl = \dlantmly\) considered in \Cref{sec:twostageapprox}. Therein, we take \(\dl[k]X = X_k - X_{k-1}\), where \(X_k\p{Y}\) uses Euler-Maruyama simulation of \(X = \pi_T\p{S_T}\) with step-size \(h_k\propto 2^{-k}\) given \(Y = S_{T/2}\). The intermediate state \(S_{T/2} = Y\approx Y_\ell\) is further approximated using Euler-Maruyama with step size \(h_\ell\propto 2^{-\ell}\). Under this setup, for the SDE \eqref{eqn:sde}, it follows that \Cref{assumpt:twostageY,assumpt:twostagegenY} hold for \(\bq[p] = \bqy[p] = 1\) and \(\lambda_p\equiv 0\) for all \(p\le q<\infty\). In particular, taking \(\Mlk = \Mlk[0][0]2^{\ell-k}\) in \(\dlantmly\), it follows that \(\text{Cost}\p*{\dlantmly}\le \hat b_0 \ell 2^\ell\) and \(\var{\dlantmly}\le \hat b_12^{-\ell}\). From \Cref{cor:two_stage_complexity}, the cost is of order \(\tol^{-2}\abs{\log\tol}^3\) to attain root mean square error \(\tol\). {For comparison,} by approximating \eqref{eqn:nestexp2}  using standard (single-level) Monte Carlo methods, with inner Monte Carlo simulation of \(\nestE[1]\) and Euler-Maruyama approximation of \(X\) and \(Y\), {the cost increases to order \(\tol^{-4}\) \cite{GilesHajiAliSpence:CVA,Gordy:2010}.}

\subsection{Results}
For the numerical experiment, we consider the SDE \eqref{eqn:sde} with \(d = 4,\ r = 0.05,\ S_0^0 = 0.2,\ T = 2\). The volatility of \(S_t^0\) is set as \(\sigma^0 = \sqrt{\log\p{2}/T}\) which gives \(\var{S_T^0/S_0^0}  = 1\). Moreover, we fix separate initial values \(S_0^i \in \p{9.4, 10.6}\) and volatilities \(\sigma^i\in\p{0.3, 0.4}\) for each asset \(i=1,\dots, d\). The strike price is computed as \(K = d^{-1}\sum_{i=1}^dS_0^i\).\\

\Cref{fig:twostage_bermuda_stats} plots estimated statistics of the multilevel difference \(\dl\) for each method over the range \(0\le\ell\le 10\). Each statistic is estimated using \(10^5\) independent samples. The top-left plot shows the cost of sampling each term, computed as the number of independent \(\mathcal N\p{0, 1}\) random variables required. By construction, the cost of both Euler-Maruyama and antithetic Milstein with inner samples \(\Mlk\propto 2^{\ell - k}\) is of order \(\ell2^\ell\), which aligns with \eqref{eqn:dlmly_cost}. The top-right plot shows the multilevel correction variance against \(\ell\). The observed results are consistent with the theoretical rates \(\var{\dlantmly} = \Order{2^{-\ell}}\) and \(\var{\dlantpath} = \Order{2^{-3\ell/2}}\). At level \(\ell=10\), the doubly antithetic Milstein estimate has a variance around 100 times smaller than the corresponding Euler-Maruyama approach. The bottom-left plot shows the bias of each correction statistic, which is important to determine the total number of levels \(L\) required for the MLMC estimate of \(\nestE[0]\) \eqref{eqn:mlmc} to reduce the bias sufficiently. Both methods have bias decreasing at rate \(2^{-\ell}\), although Euler-Maruyama has a pre-asymptotic phase which increases the bias by a constant factor over the Milstein difference. The bottom-right plot shows the kurtosis of \(\dl\), defined as \(\E{\abs{\dl}^4}/\var{\dl}^2\). This statistic determines the number of samples required to accurately estimate \(\var{\dl}\), which is important in estimating the optimal number of outer samples \(\{M_\ell\}_{\ell=0}^L\) required in \eqref{eqn:mlmc}. From \Cref{prop:moments_antithetic_path_twostage,lem:2stageanty}, it follows for the values of \(\bq[p],\bqy[p]\) stated above that the kurtosis is of order \(2^{\ell/2}\) for \(\dlantpath\) and of order \(1\) for \(\dlantmly\). A high kurtosis at large levels can lead to inaccurate parameter estimates when few samples are available. Since MLMC estimators aim to require a relatively small number of samples at large levels, it may not be enough to estimate \(\var{\dl}\) using the sample variance and retain efficient uncertainty quantification. One alternative is to use estimates from previous levels with sufficient sample sizes to infer the variance of higher-level differences. In \cite{Collier:CMLMC,GilesHajiAliSpence:CVA,Elfverson:2016selectiverefinement}, Bayesian extensions of this idea are considered for similar problems. \\

\Cref{fig:twostage_bermuda_mlmc} shows the cost of a single MLMC estimate against the (normalised) root mean square error \(\tol\). The cost is multiplied by \(\tol^{2}\) so that horizontal lines represent order \(\tol^{-2}\) cost. The Euler-Maruyama method is around 10 times more costly for \(\tol\approx 10^{-4}\) owing to the \(\abs{\log}^3\) factor in its cost. Note that the antithetic Milstein approach has a slight (but less prominent) increase over flat \(\tol^{-2}\) convergence. A reference line shows the extra factor is close to \(\abs{\log\tol}\), which implies the estimator is computed within a pre-asymptotic phase, as discussed above following \eqref{eqn:mlmccost_num}. An additional reference line illustrates the convergence rate for single-level Monte Carlo using Euler-Maruyama simulation. Both multilevel estimates show significant improvements over the single-level approach. The multilevel framework will lower the cost by several orders of magnitude at reasonable error tolerances.\\

Regression on the data from \Cref{fig:twostage_bermuda_stats}, accounting for pre-asymptotic behaviour, is used to select the truncation level \(L\), reducing the bias to the specified tolerance. The number of outer samples \(\br{M_\ell}_{\ell=0}^L\) required to meet the specified statistical error are estimated by \(M_\ell\approx \widehat M_\ell\) using the variance of samples at each level available to the estimator. In practice, \(1.65\widehat M_\ell\) samples are used for each level, which gives \(\approx 90\%\) confidence in the statistical error of the estimator. To illustrate this criterion, \Cref{fig:twostage_bermuda_mlmc} (Right) shows the absolute error of 10 independent MLMC averages for each choice of the multilevel differences computed at a range of tolerances. To compute the error, an estimate with normalised error \(\approx 3\times10^{-5}\) is selected to approximate the true solution. A linear line indicates the error threshold at each tolerance, accounting for the normalisation. Points below this line have an absolute error less than the target root mean square error. A clear majority of samples meet the threshold. A few outliers slightly exceed the error tolerance, which aligns with a \(10\%\) chance of observing a more significant error.
\begin{figure}
	\centering
	\begin{tabular}{|c c|}
		\hline
		\ref*{pl:em} Euler-Maruyama & \ref*{pl:mil} Antithetic Milstein \\
		\hline
	\end{tabular}
	\begin{tabular}{c p{1.5cm} c}
	\begin{tikzpicture}[trim axis left, trim axis right]
	\begin{axis}[
	xlabel = \(\ell\),
	ylabel = \(\textnormal{Cost}\p{\dl}\),
	grid = both,
	ymode = log,
	]
	\draw(5, 100) node[fill = white,]{\(\Order{\ell2^{\ell}}\)};
	\addplot+[mark = square, mark size = 3pt, black, opacity = \opac,] 
	table[x expr = \thisrow{levels}, y expr = \thisrow{sample_cost}, col sep = comma,]{data/em_T_2/level_statistics.csv};
	\label{pl:em}
	
	\addplot+[mark = asterisk, mark size = 3pt, black, opacity = \opac,] 
	table[x expr = \thisrow{levels}, y expr = \thisrow{sample_cost}, col sep = comma,]{data/mil_T_2/level_statistics.csv};
	\label{pl:mil}

	\addplot+[mark = none, dashed, black, domain = 1:10,]{10*x*2^(x)};
	
	\end{axis}
	\end{tikzpicture}
	& &
	\begin{tikzpicture}[trim axis left, trim axis right]
	\begin{axis}[
	xlabel = \(\ell\),
	ylabel = \(\var*{\dl}\),
	grid = both,
	ymode = log,
	]
	\addplot+[mark = square, mark size = 3pt, black, opacity = \opac,] 
	table[x expr = \thisrow{levels}, y expr = \thisrow{var}, col sep = comma,]{data/em_T_2/level_statistics.csv};
	
	\addplot+[mark = asterisk, mark size = 3pt, black, opacity = \opac,] 
	table[x expr = \thisrow{levels}, y expr = \thisrow{var}, col sep = comma,]{data/mil_T_2/level_statistics.csv};
	
	\draw(3, 1/10^6) node[fill = white,]{{\(\Order{2^{-3\ell/2}}\)}};
	\draw(6, 1/10^1.5) node[fill = white,]{\(\Order{2^{-\ell}}\)};
	\addplot+[mark = none, dashed, black, domain = 2:10,]{2^(-3*x/2)/1000};
	\addplot+[mark = none, dashed, black, domain = 2:10,]{2^(-2*x/2)/5};
	\end{axis}
	\end{tikzpicture}
	\\
	\begin{tikzpicture}[trim axis left, trim axis right]
	\begin{axis}[
	xlabel = \(\ell\),
	ylabel = \(\abs*{\E{\dl}}\),
	grid = both,
	ymode = log,
	]	
	\addplot+[mark = square, mark size = 3pt, black, opacity = \opac,] 
	table[x expr = \thisrow{levels}, y expr = abs(\thisrow{mean}), col sep = comma,]{data/em_T_2/level_statistics.csv};
	
	\addplot+[mark = asterisk, mark size = 3pt, black, opacity = \opac,] 
	table[x expr = \thisrow{levels}, y expr = abs(\thisrow{mean}), col sep = comma,]{data/mil_T_2/level_statistics.csv};
	\draw(6, 1/10^1.5) node[fill = white,]{\(\Order{2^{-\ell}}\)};
\addplot+[mark = none, dashed, black, domain = 2:10,]{1.5*2^(-2*x/2)/5};
	\end{axis}
	\end{tikzpicture}
	& &
	\begin{tikzpicture}[trim axis left, trim axis right]
	\begin{axis}[
	xlabel = \(\ell\),
	ylabel = \(\text{Kurtosis}\p{\dl}\),
	grid = both,
	ymode = log,
	]	
	\addplot+[mark = square, mark size = 3pt, black, opacity = \opac,] 
	table[x expr = \thisrow{levels}, y expr = abs(\thisrow{kurtosis}), col sep = comma,]{data/em_T_2/level_statistics.csv};
	
	\addplot+[mark = asterisk, mark size = 3pt, black, opacity = \opac,] 
	table[x expr = \thisrow{levels}, y expr = abs(\thisrow{kurtosis}), col sep = comma,]{data/mil_T_2/level_statistics.csv};
	
	\draw(8, 30) node[fill = white,]{\(\Order{2^{\ell/2}}\)};
	\addplot+[mark = none, dashed, black, domain = 2:10,]{5*2^(x/2)};
\end{axis}
	\end{tikzpicture}
	\end{tabular}
	\caption{Multilevel correction statistics for the two-stage Bermudan option priced using \eqref{eqn:mlmc} with \(\dl = \dlantmly\) using Euler-Maruyama path simulation and \(\dl = \dlantpath\) using antithetic Milstein path simulation.}
	\label{fig:twostage_bermuda_stats}
\end{figure} \begin{figure}
	\def\normfactor{0.5464602511221934} \centering
	\begin{tabular}{|c c|}
		\hline
		\ref*{pl:em} Euler-Maruyama &  \ref*{pl:mil} Antithetic Milstein\\
		 \ref*{pl:logtol} Order \(\tol^{-2}\abs{\log\tol}\) &  \ref*{pl:logtol3} Order \(\tol^{-2}\abs{\log\tol}^3\) \\
		 \multicolumn{2}{|c|}{\ref*{pl:tolm4} Order \(\tol^{-4}\) (Standard Monte Carlo)}   \\
		  \hline
	\end{tabular}
	\begin{tabular}{c p{1.5cm} c}
	\begin{tikzpicture}[trim axis left, trim axis right]
		\begin{axis}[
		xlabel = Normalised \(\tol\),
		ylabel = \(\textnormal{Cost}_{\textnormal{MLMC}}\times {\tol}^{2}\),
		grid = both,
		ymode = log,
		xmode = log,
		ymax = 10^5,
]
		\addplot+[mark = none, densely dotted, black, domain = 2*10^(-4):1.6*10^(-2),opacity = \opac,]{0.06*x^(-2)};
		\label{pl:tolm4}
		\addplot+[mark = none, dashed, black, domain = 2*10^(-5):2*10^(-2),opacity = \opac,]{8*(abs(ln(x*\normfactor/2.33)))^3};
		\label{pl:logtol3}		
,		\addplot+[mark = none, black, domain = 2*10^(-5):2*10^(-2.5),opacity = \opac,]{40*(abs(ln(x*\normfactor)))};
		\label{pl:logtol}	
			
		\addplot+[mark = square, mark size = 3pt, black, opacity = \opac,] 
		table[x expr = \thisrow{tol}/\normfactor, y expr = \thisrow{cost}*(\thisrow{tol}/\normfactor)^2, col sep = comma, skip coords between index={0}{2}]{data/em_T_2/mlmc_statistics.csv};
		
		\addplot+[mark = asterisk, mark size = 3pt, black, opacity = \opac,skip coords between index={0}{2}] 
		table[x expr = \thisrow{tol}/\normfactor, y expr = \thisrow{cost}*(\thisrow{tol}/\normfactor)^2, col sep = comma,]{data/mil_T_2/mlmc_statistics.csv};

		\end{axis}
	\end{tikzpicture}
	& &
	\begin{tikzpicture}[trim axis left, trim axis right]
	\begin{axis}[
	xlabel = Normalized \(\tol\),
	ylabel = Absolute Error,
xmode = log,
	ymode = log,
	]
	\addplot+[only marks, mark = square, mark size = 3pt, black, opacity = 0.3] 
	table[x expr = \thisrow{tol}/\normfactor, y expr = abs(\thisrow{P} - \normfactor), col sep = comma,]{data/em_T_2/Ps.csv};

	\addplot+[only marks, mark = asterisk, mark size = 3pt, black, opacity = 0.3] 
	table[x expr = \thisrow{tol}/\normfactor, y expr = abs(\thisrow{P} - \normfactor), col sep = comma,]{data/mil_T_2/Ps.csv};

	\addplot+[mark = none, black, domain = (1/2^12.3)/\normfactor:1/2^6.7/\normfactor,]{x*\normfactor};

	\end{axis}
	\end{tikzpicture}
	\end{tabular}
	\caption{(Left) MLMC cost times \(\tol^2\) against root mean square error \(\tol\) (normalised according to the approximate true solution) for the two-stage Bermudan option. (Right) Absolute error against Normalized \(\tol\) for 10 independent MLMC runs. Results are shown for the correction terms \(\dl = \dlantmly\) using Euler-Maruyama and \(\dl = \dlantpath\) using antithetic Milstein.}
	\label{fig:twostage_bermuda_mlmc}
\end{figure}
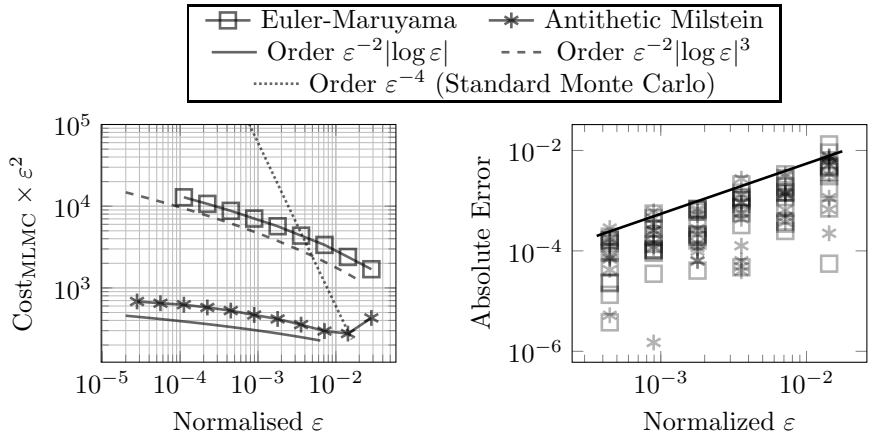

	\section{Conclusion}
We considered the use of recursive multilevel Monte Carlo techniques to approximate the quantity \(\nestE[0]\) defined through \eqref{eqn:nestexp2}. In \Cref{sec:ant_approx}, an antithetic nested multilevel estimator is constructed, which allows for biased approximations to \(\nestE[1]\p{Y} = \E{X\given Y}\) in the case where \(X\) cannot be sampled exactly. \Cref{sec:twostageapprox} extends this approach to include approximate simulation of \(Y\). Under general conditions, the resulting estimators of \(\nestE[0]\) are shown to obtain root mean square error \(\tol\) with cost of order \(\tol^{-2}\) or \(\tol^{-2}\abs{\log\tol}^3\), depending on the convergence of the approximations to \(X\) and \(Y\). In \Cref{sec:ant_approx}, a doubly antithetic estimator is proposed, which can obtain order \(\tol^{-2}\) asymptotic cost for a broader class of applications.\\

	One future direction of interest involves generalising \eqref{eqn:nestexp2} to include \(\nestE[0] = \E{f\p{\nestE[1]\p{Y}, Y}}\) for more general piecewise-smooth functions \(f\) and allowing \(X\) to take values in \(\rset^d\). Much of the analysis within this paper will remain the same, using \Cref{lem:ant_general_smooth} in place of \Cref{lem:ant_general}. Another application of interest is the extension to arbitrary depths of nested expectations, and a comparison with the recursive randomised MLMC approach in \cite{zhou2022,syed2023optimal}. By recursively using the methods of \Cref{sec:twostageapprox,sec:ant_path_2stage},  one could also consider the approximation of repeatedly nested expectations with approximate samples of \(Y\). This approach has potentially interesting applications for Bermudan option pricing where the underlying market is approximated either by Euler-Maruyama or Milstein discretisation.
	
	\subsubsection*{Acknowledgements}
	JS was supported by the EPSRC Centre for Doctoral Training in Mathematical Modelling, Analysis and Computation (MAC-MIGS) funded by the UK Engineering and Physical Sciences Research Council (grant EP/S023291/1), Heriot-Watt University and the University of Edinburgh.\\
	The authors are thankful for helpful discussions with Professor Michael B. Giles at an early stage of this project, in particular with regards to \Cref{lem:ant_general_smooth} for the smooth function case.
	
	\bibliographystyle{abbrv}
	\bibliography{../references}
	
	\appendix 
	\section{General Antithetic Result}
		This appendix proves a stronger version of \Cref{lem:ant_general}, considering more general functions of the random variable of interest. The extended result is useful in \Cref{sec:ant_path_2stage} when proving convergence of antithetic correction terms for \(\pi\p{Y}\).

\begin{lemma}
	\label{lem:ant_general_smooth}
	Consider a function \(g:\rset^d\to\rset\) mapping \(z\in\rset^d\) to
	\[
		g\p{z} = g_s\p{z} + \sum_{r=1}^R \max\br*{P_r\p{z},0},
	\]
	for fixed \(R\in\zset_{\ge 0}\), where 
	\begin{itemize}
		\item \(g_s:\rset^d\mapsto\rset\) is twice differentiable with second derivatives bounded by \(L_s^{\prime\prime}\ge 0\) and \(\alpha\)-H\"older continuous such that
		\[
			\abs{g_s\p{z_0} - g_s\p{z_1}}\le L_s\norm{z_0-z_1}^\alpha,
		\]
		for all \(z_0, z_1\).
		\item Each \(P_r:\rset^d\mapsto\rset\) is an affine projection of \(z\) into \(\rset\).
	\end{itemize}
	Let \(Z\) represent an \(\rset\)-valued random variable such that for some \(\bar \delta, \bar \rho > 0\),  we have
	\begin{equation}
		\label{eqn:cdf_bound_general_smooth}
		\sup_{1\le r\le R}\prob{\abs{P_r\p{Z}} < x} \le \bar\rho x,	
	\end{equation}
	for all \(x \le \bar\delta\). Consider a sequence of approximations \(\{Z_\ell\}_{\ell\in\zset_{\ge 0}}\) to \(Z\) satisfying the condition that for some \(q > 2\) we have
	\[
		\sup_{\ell\ge0}\E*{\norm{Z_\ell - Z}^q} < \infty.
	\]
	Let	\(Z_\ell^{\p{0}}, Z_\ell^{\p{1}}\) be two samples of \(Z_\ell\). For \(2\le p\le q/\alpha\),
	there is \(b_0 > 0\), independent of \(\ell\), such that
	\[
		\begin{aligned}
			&\E*{\abs*{ g\p*{\frac{Z_\ell^{\p{0}} + Z_\ell^{\p{1}}}{2}} - \frac{1}{2}\sum_{i=0}^1 g\p*{Z_\ell^{\p{i}}}}^p} \\
			&\le b_0\p*{
											\p{L_s^{\prime\prime}}^q\E*{\norm*{Z_\ell - Z}^q}^{\min\br{2p/q, q}} 
											+ R^q\E*{\norm*{Z_\ell - Z}^q}^{\p{p+1}/\p{q+1}}
										}.
		\end{aligned}
	\]
\end{lemma}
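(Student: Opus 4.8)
The plan is to exploit the linear decomposition of $g$ into its smooth component $g_s$ and the $R$ affine hinge terms, to treat each contribution to the antithetic difference separately, and to recombine by the triangle inequality. Writing $\bar Z_\ell \defeq \p*{Z_\ell^{\p{0}} + Z_\ell^{\p{1}}}/2$, linearity of $g$ gives
\[
	g\p*{\bar Z_\ell} - \frac{1}{2}\sum_{i=0}^1 g\p*{Z_\ell^{\p{i}}} = D_s + \sum_{r=1}^R D_r,
\]
where $D_s \defeq g_s\p*{\bar Z_\ell} - \frac{1}{2}\sum_{i=0}^1 g_s\p*{Z_\ell^{\p{i}}}$ and $D_r \defeq \max\br*{P_r\p*{\bar Z_\ell}, 0} - \frac{1}{2}\sum_{i=0}^1 \max\br*{P_r\p*{Z_\ell^{\p{i}}}, 0}$. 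Raising to the power $p$, using $\abs{a+b}^p \le 2^{p-1}\p*{\abs{a}^p + \abs{b}^p}$ together with the convexity bound $\abs*{\sum_r D_r}^p \le R^{p-1}\sum_r \abs{D_r}^p$, it suffices to bound $\E{\abs{D_s}^p}$ and each $\E{\abs{D_r}^p}$ separately and then to collect constants, where the factor $R^q$ absorbs the $R^{p-1}$ and the $L_s''$-dependence is tracked through the smooth term.

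For the hinge terms I would reduce directly to the scalar result already in hand. Since each $P_r$ is affine, $P_r\p*{\bar Z_\ell} = \frac{1}{2}\p*{P_r\p*{Z_\ell^{\p{0}}} + P_r\p*{Z_\ell^{\p{1}}}}$, so $D_r$ is exactly the one-dimensional antithetic hinge difference of \Cref{lem:ant_general} applied to the scalar target $P_r\p{Z}$ and its approximations $P_r\p{Z_\ell}$. Hypothesis \eqref{eqn:cdf_bound_general_smooth} supplies the linear-growth bound $\prob{\abs{P_r\p{Z}} < x} \le \bar\rho x$, and since $P_r$ is affine and Lipschitz with respect to the Euclidean norm, there is $c_r > 0$ with $\E{\abs{P_r\p{Z_\ell} - P_r\p{Z}}^q} \le c_r\,\E{\norm{Z_\ell - Z}^q} < \infty$. \Cref{lem:ant_general} then yields a constant $c>0$ with $\E{\abs{D_r}^p} \le c\,\E{\norm{Z_\ell - Z}^q}^{\p{p+1}/\p{q+1}}$, and summing over the $R$ terms produces the second term of the claimed bound.

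The smooth part is the delicate step. Here the antithetic average is a symmetric second difference about the midpoint: Taylor expanding $g_s$ at $\bar Z_\ell$, the zeroth- and first-order contributions cancel because $Z_\ell^{\p{0}} - \bar Z_\ell = -\p*{Z_\ell^{\p{1}} - \bar Z_\ell}$, leaving only a second-order remainder controlled by the Hessian bound, so that
\[
	\abs{D_s} \le \tfrac{1}{8}L_s''\,\norm*{Z_\ell^{\p{0}} - Z_\ell^{\p{1}}}^2 .
\]
Raising to the $p$-th power and using $\norm*{Z_\ell^{\p{0}} - Z_\ell^{\p{1}}} \le \norm{Z_\ell^{\p{0}} - Z} + \norm{Z_\ell^{\p{1}} - Z}$ reduces matters to controlling $\E{\norm{Z_\ell - Z}^{2p}}$. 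When $2p \le q$ this is immediate from Jensen's inequality and gives the exponent $2p/q$. The main obstacle is the regime $2p > q$, where the $2p$-th moment is not assumed finite: here my plan is to combine the quadratic estimate above with the $\alpha$-H\"older estimate $\abs{D_s} \le 2^{-\alpha}L_s\,\norm*{Z_\ell^{\p{0}} - Z_\ell^{\p{1}}}^\alpha$, splitting according to whether the increment is small or large (equivalently interpolating the two pointwise bounds) so that only moments up to order $q$ are ever used — which is precisely why the hypothesis caps $p$ at $q/\alpha$.

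Carrying out this interpolation carefully, verifying that the resulting exponent on $\E{\norm{Z_\ell - Z}^q}$ is $\min\br{2p/q,\,q}$ and that the constant can be kept proportional to $\p{L_s''}^q$, is the crux of the argument; once the two pieces are in place, combining them with the hinge estimate through the splitting in the first paragraph gives the stated bound with a constant $b_0$ independent of $\ell$, the remaining work being routine tracking of constants.
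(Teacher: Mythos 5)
Your decomposition into the smooth difference $D_s$ and the hinge differences $D_r$, and your treatment of $D_s$, coincide with the paper's own proof: there too the gradient terms cancel by symmetry about the midpoint $\bar Z_\ell$, giving the quadratic bound $\abs{D_s}\le \tfrac12 L_s^{\prime\prime}\norm{Z_\ell^{(0)}-Z_\ell^{(1)}}^2$, which is then interpolated against the H\"older bound $\abs{D_s}\le L_s\norm{Z_\ell^{(0)}-Z_\ell^{(1)}}^{\alpha}$ with weight $v=\max\br{0,\p{2p-q}/p\p{2-\alpha}}$ so that only the $q$-th moment of $\norm{Z_\ell-Z}$ is ever used; this is exactly the interpolation you sketch, and it is precisely where the restriction $p\le q/\alpha$ enters.

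The genuine gap is in the hinge part. Your reduction of each $D_r$ to the one-dimensional antithetic difference of the scalar samples $P_r\p{Z_\ell^{(i)}}$ about the kink of $\max\br{\cdot,0}$ is correct, but you then close the argument by invoking \Cref{lem:ant_general}. In this paper \Cref{lem:ant_general} has no independent proof: it is stated in Section 2 as following immediately from \Cref{lem:ant_general_smooth}, i.e.\ from the very lemma you are proving, so your argument is circular at exactly the step that carries the lemma's main content, the exponent $\p{p+1}/\p{q+1}$. (The related results the paper cites from the literature are essentially for $p=2$; handling general moments $p\le q$ is the stated purpose of this lemma, so they cannot be substituted either.) What must be supplied is the scalar argument itself: $D_r\neq 0$ only on the event $P_r\p{Z_\ell^{(0)}}P_r\p{Z_\ell^{(1)}}<0$, which forces $\abs{P_r\p{Z}}<\abs{P_r\p{Z_\ell^{(i)}}-P_r\p{Z}}$ for some $i\in\br{0,1}$; introducing a threshold $0<\psi\le\bar\delta$ and splitting according to whether the relevant increment exceeds $\psi$, the distributional bound \eqref{eqn:cdf_bound_general_smooth} gives a contribution of order $\bar\rho\psi^{p+1}$, while a Markov-type bound gives a contribution of order $\psi^{p-q}\E{\norm{Z_\ell-Z}^q}$ on the complementary event; choosing $\psi\propto\E{\norm{Z_\ell-Z}^q}^{1/\p{q+1}}$ balances the two terms and yields the claimed exponent. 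Without this argument the central estimate of the lemma is assumed rather than proven.
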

\begin{remark}
{Similar results are proven in \cite[Theorem 4.1]{GilesHajiAli:2018} and \cite[Theorem 3]{gg19} for \(\rset\)-valued variables \(Z\) and \(g\p{z} = \max\br{z, 0}\). The case  where \(g\) is piecewise linear and for a specific random variable \(Z \in\p{0,1}\) is considered in \cite{BujokK2015}. The authors in \cite[Proposition 5]{Bourgey20} and \cite[Theorem 5.2]{giles14antmilstein} consider a similar class of functions \(g\) to \Cref{lem:ant_general_smooth}. Since the complexity of MLMC sums requires convergence of the second moment, many results are stated for \(p = 2\) in \Cref{lem:ant_general_smooth}. Since many extensions of MLMC methods such as adaptivity \cite{GilesHajiAli:2018,hajiali2021adaptive} or branching \cite{gh22b} involve convergence of higher moments, the proof below aims to capture convergence of general moments \(p\le q\).}
\end{remark}
\begin{proof}
	Let 
	\[
		\bar Z_\ell \defeq \frac{Z_\ell^{\p{0}} + Z_\ell^{\p{1}}}{2}.
	\]
	By Jensen's inequality, note that 
	\[
		\begin{aligned}
			&\E*{\abs*{ g\p*{\frac{Z_\ell^{\p{0}} + Z_\ell^{\p{1}}}{2}} - \frac{1}{2}\sum_{i=0}^1 g\p*{Z_\ell^{\p{i}}}}^p}\\
			 &\le \p{R+1}^{p-1}\p*{\E*{\abs*{\Delta_{s, \ell}}^p} + \sum_{r=1}^R d_r^p\E*{\abs*{\Delta_{r, \ell}}^p}},
		\end{aligned}
	\]
	where 
	\[
		\begin{aligned}
				\Delta_{s, \ell} &= g_s\p*{\bar Z_\ell} - \frac{1}{2}\sum_{i=0}^1 g_s\p*{Z_\ell^{\p{i}}}\\
				\Delta_{r, \ell} &= \max\br*{\bar Z_\ell, e_r} - \frac{1}{2}\sum_{i=0}^1 \max\br*{Z_\ell^{\p{i}}, e_r}.
		\end{aligned}
	\]
	For the first term, using the H\"older continuity condition and combining second order Taylor expansions gives
	\[
		\begin{aligned}
			\abs*{\Delta_{s, \ell}} &\le L_s\norm*{{Z_\ell^{\p{0}}} - {Z_\ell^{\p{1}}}}^\alpha\\
			\abs*{\Delta_{s, \ell}} &\le \frac{L_s^{\prime\prime}}{2}\norm*{{Z_\ell^{\p{0}}} - {Z_\ell^{\p{1}}}}^2.
		\end{aligned}
	\]
	By taking appropriate powers of the above expansions, with \(v = \max\br{0, \p{2p-q}/p\p{2-\alpha}}\), for \(p \le q/\alpha\) it follows that
	\[
		\begin{aligned} 
			\E*{\abs*{\Delta_{s, \ell}}^p} &\le \E*{\abs*{\Delta_{s, \ell}}^{\alpha vp}\abs*{\Delta_{s, \ell}}^{2\p{1-v}p}}\\
				&\le 2^{q-1}L_s^{vp}\p{L_s^{\prime\prime}}^{\p{1-v}p}\E*{\norm*{Z_\ell - Z}^{\min\br{2p, q}}}\\
				&\le 2^{q-1}L_s^{vp}\p{L_s^{\prime\prime}}^{\p{1-v}p}\E*{\norm*{Z_\ell - Z}^q}^{\min\br{2p/q, q}},
		\end{aligned}
	\]
	using H\"older's inequality in the final line.\\
	
For the second term, note that since \(P_r\) is affine,  \(\Delta_{r, \ell}\) is non-zero only when the indicator
	\[
		I \defeq \I{P_r\p{Z_\ell^{\p{0}}}P_r\p{Z_\ell^{\p{1}}} < 0}
	\]
	is equal to 1. The event \(I = 1\) implies either \(P_r\p{Z_\ell^{\p{0}}} P_r\p{Z} < 0\) or \(P_r\p{Z_{\ell}^{\p{1}}}P_r\p{Z} < 0\). Hence, we can bound
	\[
		I \le I_0 + I_1,
	\]
	where we define
	\[
		\begin{aligned}
			I_i &\defeq \I{\abs{P_r\p{Z}} < \abs{P_r\p{Z^{\p{i}}} - P_r\p{Z}}},\\
			\hat I_i &\defeq \I{\abs{P_r\p{Z^{\p{i}}} - P_r\p{Z}} < \abs{P_r\p{Z^{\p{0}}} - P_r\p{Z^{\p{1}}}}}.
		\end{aligned}
	\]
	Then, 
	\[
		\begin{aligned}
			\E*{\abs*{\Delta_{r,\ell}}^p} &= \E*{\abs*{\Delta_{r,\ell}}^pI}\\
			&\le \E*{\abs*{\Delta_{r,\ell}}^p\p*{I_0 + I_1}}.
		\end{aligned}
	\]
	For \(i = 0, 1\) and any \(0<\psi\le\bar\delta\), letting \(L_r\) denote the Lipschitz constant of \(P_r\), it follows from \eqref{eqn:cdf_bound_general} that 
	\[
		\begin{aligned}
			&\E*{\abs*{\Delta_{r,\ell}}^pI_i}\\
			 &\le  L_r^p\E*{\abs*{P_r\p{Z_{\ell}^{\p{0}}} - P_r\p{Z_\ell^{\p{1}}}}^p\p*{I_i\hat I_i + I_i\p{1 - \hat I_i}}}\\
			&\le  L_r^p\E*{\abs*{P_r\p{Z_{\ell}^{\p{0}}} - P_r\p{Z_\ell^{\p{1}}}}^p \I{\abs{P_r\p{Z}} < \abs{P_r\p{Z_\ell^{\p{0}}} - P_r\p{Z_\ell^{\p{1}}}}}}\\
			&\quad +    L_r^p\E*{\abs*{P_r\p{Z_{\ell}^{\p{i}}} - P_r\p{Z}}^p \I{\abs{P_r\p{Z}} < \abs{P_r\p{Z_\ell^{\p{i}}} - P_r\p{Z}}}}\\
			&\le  L_r^p\E*{\abs*{P_r\p{Z_{\ell}^{\p{0}}} - P_r\p{Z_\ell^{\p{1}}}}^p \p*{\I{\abs{P_r\p{Z}} < \abs{P_r\p{Z_\ell^{\p{0}}} - P_r\p{Z_\ell^{\p{1}}}} < \psi} + \I{\psi < \abs{P_r\p{Z_\ell^{\p{0}}} - P_r\p{Z_\ell^{\p{1}}}}}} }\\
			&\quad +    L_r^p\E*{\abs*{P_r\p{Z_{\ell}^{\p{i}}} - P_r\p{Z}}^p \p*{\I{\abs{P_r\p{Z}} <  \abs{P_r\p{Z_\ell^{\p{i}}} - P_r\p{Z}} < \psi} + \I{\psi < \abs{P_r\p{Z_\ell^{\p{i}}} - P_r\p{Z}}}}}\\
			&\le  2L_r^p\bar\rho\psi^{p+1} +  L_r^q\p{1+2^q}\psi^{p-q} \E*{\norm*{Z_{\ell}- Z}^q} .
		\end{aligned}
	\]
	Hence,
	\[
	\E{\abs*{\Delta_{r,\ell}}^p} \le 4L_r^p\bar\rho \psi^{p+1} + 2\p{1+2^q}L_r^q\psi^{p-q}\E*{\norm*{Z_\ell - Z}^q}.
	\]
	Taking 
	\[
		\psi = \min\br*{1, \frac{\bar\delta}{\sup_{\ell\ge 0}\E*{\norm{Z_\ell - Z}^q}}}\p*{\E*{\norm*{Z_\ell - Z}^q}}^{1/\p{q+1}}
	\]
	completes the result.
\end{proof} 
\end{document}